\newtheorem{theorem}{Theorem}
\newtheorem{proposition}{Proposition}
\newtheorem{corollary}[proposition]{Corollary}
\newtheorem{remark}{Remark}
\newtheorem{definition}{Definition}
\newtheorem{example}{Example}
\newcommand{\best}{{{\mathrm{top}}}}
\newcommand{\calR}{{{\mathcal{R}}}}
\newcommand{\np}{{{\mathrm{NP}}}}
\title{The Condorcet Principle for Multiwinner Elections: \\ From Shortlisting to Proportionality}
\author{Haris Aziz\\
Data61, CSIRO and UNSW\\
Sydney, Australia
\and
Edith Elkind\\
  University of Oxford\\
  Oxford, UK
\and
Piotr Faliszewski\\
AGH University\\
Krakow, Poland
\and
Martin Lackner\\
  University of Oxford\\
  Oxford, UK
  \and 
Piotr Skowron\\
  University of Oxford\\
  Oxford, UK
}
\date{}
\begin{document}

\maketitle

\begin{abstract}
We study two notions of stability in multiwinner elections that are based on the Condorcet criterion. The first notion was 
introduced by Gehrlein: A committee is stable if each committee member is preferred to each non-member by a (possibly 
weak) majority of voters. The second notion is called local stability (introduced in this paper): A size-$k$ committee is 
locally stable in an election with $n$ voters if there is no candidate $c$ and no group of more than $\frac{n}{k+1}$ 
voters such that each voter in this group prefers $c$ to each committee member. We argue that Gehrlein-stable 
committees are appropriate for shortlisting tasks, and that locally stable committees are better suited for applications 
that require proportional representation. The goal of this paper is to analyze these notions in detail, explore their 
compatibility with notions of proportionality, and investigate the computational complexity of related algorithmic tasks.
\end{abstract}

\section{Introduction}
\noindent
The notion of a Condorcet winner is among the most important ones in
(computational) social
choice~\cite{arrow2002handbook,handbook-comsoc}. Consider a group of agents,
each with a preference order over a given
set of candidates. The Condorcet condition says that if
there exists a candidate $c$ that is preferred to every other candidate
by a majority of agents (perhaps a different majority in each case),
then this candidate $c$ should be seen as the collectively best
option. Such a candidate is known as the Condorcet winner.

In single-winner elections, that is, in settings where the goal is to choose
one candidate (presidential elections are a prime example here), there are strong arguments
for choosing a Condorcet winner whenever it exists. For example, in case of presidential elections 
if a Condorcet winner existed but was not chosen as the country's
president, a majority of the voters might revolt. 
  (We note, however, that there are also arguments against rules that
  choose Condorcet winners whenever they exist: for example, such
  rules suffer from the no-show paradox~\cite{Moulin88,brandt2016optimal} 
  and fail the reinforcement axiom~\cite{RePEc:cup:cbooks:9780521360555}.)

In this paper, we consider multiwinner elections, that is, 
settings where instead of choosing a single winner (say, the president) 
we choose a collective body of a given size (say, a parliament).
The goal of our paper is to analyze generalizations of the
concept of a Condorcet winner to multiwinner elections. 
There are several natural definitions  of ``a Condorcet committee'' and we consider their merits and application
domains (we write ``Condorcet committee'' in quotes because several
notions could be seen as deserving this term and, thus, eventually we
do not use it for any of them).

First, we can take the approach of Gehrlein~\cite{Gehrlein1985199} and
Ratliff~\cite{ratliff2003some}, where we want the committee to be a
collection of high-quality individuals who do not necessarily need to
cooperate with each other (this is a natural approach, e.g., when we
are shortlisting a group of people for an academic position or for
some prize~\cite{bar-coe:j:non-controversial-k-names,elk-fal-sko-sli:c:multiwinner-rules}).
In this case, each member of the ``Condorcet committee'' should be
preferred by a majority of voters to all the non-members.

Alternatively, there is the approach of Fishburn~\cite{fishburn1981}
(also analyzed from an algorithmic perspective by Darmann~\cite{Darmann2013}), where
we assume that the committee members have to work so closely with each
other that it only makes sense to consider voters' preferences over
entire committees rather than over individual candidates (this would be
natural in selecting, e.g., small working groups). In this case, a ``Condorcet committee'' is
a committee that is preferred to every other committee by a majority of
voters. However, this approach is of limited use when voters express 
their preferences over individual candidates:
while such preferences can be lifted to preferences over committees, 
e.g., by using a scoring function, in the presence of strong synergies
among the committee members the induced preferences over committees
are unlikely to offer a good approximation of voters' true preferences.
Therefore, we do not pursue this approach in our work.

Finally, there is a middle-ground approach, proposed by Elkind et
al.~\cite{journals/scw/ElkindLS15}, where the committee members focus
on representing the voters (this is the case, e.g., in parliamentary
elections). In this case, we compare committees against single
candidates: We say that a voter $i$ prefers committee $W$ to some
candidate $c$ if there exists a candidate $w$ in $W$ (who can be seen as the
representative of this voter) such that $i$ prefers $w$ to $c$. Now we
could say that a ``Condorcet committee'' is one that is preferred to
each candidate outside the committee by a majority of voters. Indeed, Elkind et
al.~\cite{journals/scw/ElkindLS15} refer to such committees as
\emph{Condorcet winning sets}.

Elkind et al.~\cite{journals/scw/ElkindLS15} were unable to find
an election with no Condorcet winning set of size three; their empirical results
suggest that such elections are very unlikely.
Thus, to use their approach in order to select large committees in a meaningful way, 
we should focus on committees that are
preferred to unselected candidates by a large fraction of voters.
In particular, we argue that when $n$ voters select $k$ candidates, 
the winning committee should be preferred to each non-member by roughly $n-\frac{n}{k}$ 
voters. The resulting concept, which we call
{\em local stability}, can be seen as 
a translation of the notion of justified representation by
Aziz et al.~\cite{aziz:j:jr} from the world of approval-based elections to ranked-ballot elections. 
We also consider a stronger variant of this notion, which can be seen as an analogue
of extended justified representation from the work of Aziz et al.~\cite{aziz:j:jr}.

The goal of our work is to contrast the approach based on the ideas of
Gehrlein~\cite{Gehrlein1985199} and Ratliff~\cite{ratliff2003some}, which we call Gehrlein stability, 
with the approach based on Condorcet winning sets (i.e., local stability). By considering several restricted
domains (single-peaked, single-crossing, and a restriction implied by
the existence of political parties), we show that Gehrlein stable committees
are very well-suited for shortlisting (as already suggested by
Barber\'a and Coelho~\cite{bar-coe:j:non-controversial-k-names}),
whereas locally stable committees are better at providing proportional
representation.
From the point of view of the computational complexity, while in both
cases we show $\np$-hardness of testing the existence of a respective
``Condorcet committee'', we discover that a variant of the Gehrlein--Ratliff 
approach leads to a polynomial-time algorithm.

\section{Preliminaries}
\noindent
For every natural number $p$, we let $[p]$ denote the set $\{1, 2,
\ldots, p\}$.

An {\em election} is a pair $E = (C, V)$, where $C = \{c_1, \ldots,
c_m\}$ is a set of {\em candidates} and $V = (v_1, \ldots, v_n)$ is a
list of {\em voters}; we write $|V|$ to denote the number of voters in
$V$.  Each voter $v\in V$ is endowed with a linear {\em preference
  order} over $C$, denoted by $\succ_{v}$. For $\ell \in [m]$ we write
$\best_{\ell}(v)$ to denote $\ell$ candidates most preferred by
$v$. We write $\best(v)$ to denote the single most preferred candidate
of voter $v$, i.e., $\best_{1}(v) = \{\best(v)\}$, and for each $c\in
C\setminus \best(v)$ it holds that $\best(v) \succ_v c$. We write
$a\succ_v b \succ_v \dots$ to indicate that $v$ ranks $a$ first and
$b$ second, followed by all the other candidates in an arbitrary
order. Given two disjoint subsets of candidates $S, T\subseteq C$,
$S\cap T=\emptyset$, we write $S\succ_v T$ to indicate that $v$
prefers each candidate in $S$ to each candidate in $T$.

A {\em committee} is a subset of $C$. A {\em multiwinner voting rule}
$\calR$ takes an election $E=(C, V)$ and a positive integer $k$ with
$k\le |C|$ as its input, and outputs a non-empty collection of
size-$k$ committees. A multiwinner rule $\calR$ is said to be {\em
  resolute} if the set $\calR(E, k)$ is a singleton for each election
$E=(C, V)$ and each committee size~$k$. %

Given an election $E=(C, V)$ with $|V|=n$ and two candidates $c, d\in
C$, we say that $c$ {\em wins the pairwise election between $c$ and
  $d$} if more than $\frac{n}{2}$ voters in $V$ prefer $c$ to $d$; if exactly
$\frac{n}{2}$ voters in $V$ prefer $c$ to $d$, we say that the {\em pairwise
  election between $c$ and $d$ is tied}.  The {\em majority graph} of
an election $E = (C,V)$ is a directed graph $M(E)$ with vertex set $C$
and the following edge set:
\begin{align*}
\left\{(c, d)\in C^2 \mid \text{$c$ wins the pairwise
election between $c$ and $d$}\right\}.
\end{align*}
Observe that if the number of voters 
$n$ is odd, then $M(E)$ is a {\em tournament}, i.e., for each pair
of candidates $c, d\in C$ exactly one of their connecting edges, $(c, d)$ or $(d, c)$,
is present in $M(E)$. We will also consider
the {\em weak majority graph of $E$}, which we denote by $W(E)$:
this is the directed graph obtained from $M(E)$ by adding edges 
$(c, d)$ and $(d, c)$ for each pair of candidates $c, d$ such that
the pairwise election between $c$ and $d$ is tied.
A candidate $c$ is said to be a {\em Condorcet winner} of an election $E=(C, V)$
if the outdegree of $c$ in $M(E)$ is $|C|-1$; $c$ is said to be a {\em weak Condorcet winner}
of $E$ if the outdegree of $c$ in $W(E)$ is $|C|-1$.

\section{Gehrlein Stability and Local Stability}\label{sec:defs}
\noindent
Gehrlein~\cite{Gehrlein1985199} proposed a simple, natural extension
of the notion of a weak Condorcet winner to the case of multiwinner
elections, and a similar definition was subsequently introduced by
Ratliff~\cite{ratliff2003some}.  We recall and discuss Gehrlein's
definition, and then put forward a different approach to defining good
committees, which is inspired by the recent work on Condorcet winning
sets~\cite{journals/scw/ElkindLS15} and on justified representation in
approval-based committee elections~\cite{aziz:j:jr}.

\subsection{Gehrlein Stability}
\noindent
Gehrlein~\cite{Gehrlein1985199} and Ratliff~\cite{ratliff2003some}
base their approach on the following idea: a committee is unstable if
there exists a majority of voters who prefer a candidate that is not
currently in the committee to some current committee member.

\begin{definition}[Gehrlein~\cite{Gehrlein1985199}; Ratliff~\cite{ratliff2003some}]
  Consider an election $E = (C,V)$. A committee $S \subseteq C$ is
  \emph{weakly Gehrlein-stable} if for each committee member $c \in S$
  and each non-member $d \in C \setminus S$ it holds that $c$ wins or
  ties the pairwise election between $c$ and $d$.  Committee $S$ is
  \emph{strongly Gehrlein-stable} if for each $c\in S$ and each
  $d\not\in S$ the pairwise election between $c$ and $d$ is won by
  $c$. 
\end{definition}

By definition, each strongly Gehrlein-stable committee is also weakly
Gehrlein-stable, and the two notions are equivalent if the majority
graph $M(E)$ is a tournament.  Further, a strongly (respectively,
weakly) Gehrlein-stable committee of size one is simply a Condorcet
winner (respectively, a weak Condorcet winner) of a given election.
More generally, each member of a strongly (respectively, weakly)
Gehrlein-stable committee would be a Condorcet winner (respectively, a
weak Condorcet winner) should the other committee members be removed
from the election.  Note also that given a committee $S$, it is
straightforward to verify if it is strongly (respectively, weakly)
Gehrlein-stable: it suffices to check that there is no candidate in
$C\setminus S$ that ties or defeats (respectively, defeats) some
member of $S$ in their pairwise election.

Gehrlein stability has received some attention in the literature. In
particular, Ratliff~\cite{ratliff2003some},
Coelho~\cite{coelho2004understanding}, and, very recently,
Kamwa~\cite{Kamwa2015b}, proposed and analyzed a number of multiwinner
rules that satisfy weak Gehrlein stability, i.e., elect weakly
Gehrlein stable committees whenever they exist. These rules can be
seen as analogues of classic single-winner Condorcet-consistent rules,
such as Maximin or Copeland's rule (see, e.g., the survey by
Zwicker~\cite{ZwickerChapter} for their definitions).  Specifically,
each of these rules is based on a function that assigns non-negative
scores to committees in such a way that committees with score $0$ are
exactly the weakly Gehrlein-stable committees; it then outputs the
committees with the minimum score.

\smallskip

\noindent{\bf Gehrlein Stability and Majority Graphs\ } 
Gehrlein stability is closely related to a classic tournament solution
concept, namely, the top cycle (see, e.g., the survey by Brandt, Brill
and Harrenstein~\cite{BrandtEtAlChapter} for an overview of tournament
solution concepts).  Indeed, if the majority graph $M(E)$ is a
tournament, then every top cycle in $M(E)$ is a Gehrlein-stable
committee (recall that for tournaments weak Gehrlein stability is
equivalent to strong Gehrlein stability, so we use the term `Gehrlein
stability' to refer to both notions). In the presence of ties, the
relevant solution concepts are the Smith set and the Schwarz set: the
former corresponds to a weakly Gehrlein-stable committee and the
latter corresponds to a strongly Gehrlein-stable committee.

However, there is an important difference between Gehrlein committees
and each of these tournament solution concepts: When computing a
tournament solution, we aim to minimize the number of elements in the
winning set, whereas in the context of multiwinner elections our goal
is to find a weakly/strongly Gehrlein-stable committee of a given
size.  This difference has interesting algorithmic implications. While
it is easy to find a Smith set for a given tournament, in
Section~\ref{sec:complexity} we show that it is NP-hard to determine
if a given election admits a weakly Gehrlein-stable committee of a
given size.  On the other hand, we can extend the existing algorithm
for finding a Schwarz set to identify a strongly Gehrlein-stable
committee.
We defer most of our computational results until
Section~\ref{sec:complexity}, but we present the proof of this result
here because it implicitly provides a very useful characterization of
strongly Gehrlein-stable committees.

\begin{theorem}\label{thm:sg-easy}
Given an election $E=(C, V)$ and a positive integer $k$ with $k\le |C|$,
we can decide in polynomial time whether $E$ admits a strongly Gehrlein-stable 
committee of size $k$. Moreover, if such a committee exists, then it is unique.
\end{theorem}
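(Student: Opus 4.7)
The plan is to prove uniqueness by a short direct argument and then obtain the polynomial-time test by analyzing the structure of the weak majority graph $W(E)$.

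For uniqueness, I would argue as follows. Suppose $S$ and $S'$ are two distinct strongly Gehrlein-stable committees of size $k$. Since $|S|=|S'|$, both $S\setminus S'$ and $S'\setminus S$ are non-empty, so one can pick $c\in S\setminus S'$ and $d\in S'\setminus S$. Strong Gehrlein stability of $S$ applied to $c\in S, d\notin S$ forces $c$ to strictly beat $d$; symmetrically, stability of $S'$ forces $d$ to strictly beat $c$. This contradicts the fact that at most one candidate wins any pairwise election. Hence any two strongly Gehrlein-stable committees are nested, and so at most one can have any given size.

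For the algorithm, I would first reformulate the definition in graph-theoretic terms: $S\subseteq C$ is strongly Gehrlein-stable if and only if $W(E)$ contains no edge from $C\setminus S$ into $S$. Indeed, an edge $(d,c)\in W(E)$ with $d\notin S$ and $c\in S$ means $d$ either beats or ties $c$, which is exactly what the strong variant forbids; conversely, if $c$ strictly beats $d$ then $(d,c)\notin W(E)$. In particular, any such $S$ must be a union of strongly connected components of $W(E)$ that is closed under predecessors in the condensation. The key structural observation is that this condensation is always a chain: for any two distinct candidates the pairwise election is either won by one of them or tied, so at least one of the directed edges between them lies in $W(E)$; hence every pair of distinct SCCs is joined by an edge in the condensation. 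A DAG in which every pair of vertices is connected by an edge is a transitive tournament, i.e., a linear order $A_1 < A_2 < \cdots < A_t$. Consequently, the predecessor-closed unions of SCCs are exactly the prefixes $\emptyset, A_1, A_1\cup A_2, \ldots, A_1\cup\cdots\cup A_t$, and by the reformulation each such prefix is strongly Gehrlein-stable.

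The polynomial-time algorithm then essentially writes itself: construct $W(E)$, compute its SCCs and a topological order (say, by Tarjan's algorithm), and check whether $k$ equals one of the prefix sums $|A_1|, |A_1|+|A_2|, \ldots, |A_1|+\cdots+|A_t|$. If so, the corresponding prefix is the unique strongly Gehrlein-stable committee of size $k$; otherwise no such committee exists. I expect the main conceptual obstacle to be the structural lemma that the condensation of $W(E)$ is a chain—in particular, correctly handling ties as two-way edges so that the ``no incoming edges into $S$'' reformulation captures strong (and not merely weak) Gehrlein stability. Everything else, including running-time analysis, is routine.
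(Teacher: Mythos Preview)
Your proposal is correct and follows essentially the same approach as the paper: both decompose $W(E)$ into strongly connected components, argue that the condensation is a linear order (because every pair of candidates is joined by at least one edge in $W(E)$), and conclude that the strongly Gehrlein-stable committees are exactly the prefixes, so existence at size $k$ reduces to a prefix-sum check. The only cosmetic difference is that you give a direct two-line uniqueness argument up front, whereas the paper reads uniqueness off the prefix characterization at the end.
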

\begin{proof}
Given an election $E=(C, V)$, we let ${\mathcal C} = \{C_1, \dots, C_r\}$
be the list of strongly connected components of $W(E)$; note that
a graph can be decomposed into strongly connected components in polynomial time.
Given two candidates $a, b\in C$, we write $a\to b$ if $W(E)$ contains a directed path from $a$ to $b$.

Consider two distinct sets $C_i, C_j\in {\mathcal C}$ and two candidates $a\in C_i$, $b\in C_j$.
Note that the pairwise election between $a$ and $b$ cannot be tied, since otherwise
$a$ and $b$ would be in the same set. Suppose without loss of generality
that $a$ beats $b$ in their pairwise election. Then for each $a'\in C_i$, $b'\in C_j$
we have $a'\to a$, $a\to b$, $b\to b'$ and hence by transitivity
$a'\to b'$. On the other hand, we cannot have $b'\to a'$, as this would
mean that $a'$ and $b'$ belong to the same connected component of $W(E)$.
Thus, we can define a total order on $\mathcal C$ as follows:
for $C_i, C_j\in {\mathcal C}$ we set $C_i < C_j$ if $i\neq j$ and $a\to b$
for each $a\in C_i$, $b\in C_j$. By the argument above, $<$ is indeed a total order on $\mathcal C$;
we can renumber the elements of $\mathcal C$ so that $C_1<\dots<C_r$.
Then for $a\in C_i$, $b\in C_j$ we have $a\to b$ if and only if $i\le j$.

Now, consider a strongly Gehrlein-stable committee $S$. Suppose that $a\to b$ and $b\in S$.
It is easy to see that $a\in S$; this follows by induction on the length of the 
shortest path from $a$ to $b$ in $W(E)$.
Hence, every strongly Gehrlein-stable committee is of the form $\bigcup_{i\le s} C_i$
for some $s\in [r]$. Thus, there is a strong Gehrlein committee of size $k$
if and only if $\sum_{i=1}^s|C_i|=k$ for some $s\in[r]$. This argument also shows
that a strongly Gehrlein-stable committee of a given size is unique.
\end{proof}

We have argued that for tournaments the notions of weak Gehrlein stability
and strong Gehrlein stability coincide. We obtain the following corollary.

\begin{corollary}
Consider an election $E=(C, V)$. If $W(E)$ is a tournament, we can decide
in polynomial time whether $E$ has a weakly Gehrlein-stable committee of a given size.
Moreover, if such a committee exists, it is unique.
\end{corollary}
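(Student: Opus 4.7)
The plan is to reduce the corollary directly to Theorem~\ref{thm:sg-easy} by showing that, under the tournament hypothesis on $W(E)$, weak and strong Gehrlein-stability coincide. Since the preceding discussion has already asserted this equivalence when $M(E)$ is a tournament, the only thing that really needs to be checked is that the hypothesis ``$W(E)$ is a tournament'' rules out the presence of ties in pairwise elections.

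First I would observe that if the pairwise election between some pair $c,d \in C$ were tied, then by construction of $W(E)$ both edges $(c,d)$ and $(d,c)$ would belong to $W(E)$, contradicting the tournament property (which requires exactly one of the two edges between any pair of distinct vertices). Hence no pairwise election is tied, and in fact $W(E)=M(E)$ and $M(E)$ itself is a tournament.

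Next I would spell out the equivalence of the two notions in this setting: a committee $S$ is weakly Gehrlein-stable iff every $c \in S$ wins or ties the pairwise election against every $d \in C \setminus S$; since ties cannot occur, this reduces to every $c \in S$ winning against every such $d$, i.e., strong Gehrlein-stability. Thus the weakly Gehrlein-stable committees of size $k$ are exactly the strongly Gehrlein-stable committees of size $k$.

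Finally, I would invoke Theorem~\ref{thm:sg-easy} to conclude both that the existence of a size-$k$ committee can be decided in polynomial time and that such a committee is unique when it exists. The only potential ``obstacle'' is making sure the tie-free argument is stated cleanly; everything else is a direct transfer from the theorem.
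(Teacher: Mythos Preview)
Your proposal is correct and follows essentially the same approach as the paper: the paper simply invokes the earlier remark that weak and strong Gehrlein stability coincide when the majority graph is a tournament, together with Theorem~\ref{thm:sg-easy}. Your only addition is to make explicit that ``$W(E)$ is a tournament'' forces the absence of ties (hence $W(E)=M(E)$ is a tournament), which the paper leaves implicit.
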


\smallskip

\noindent{\bf Gehrlein Stability and Enlargement Consistency\ }
Interestingly, Barber\'a and Coelho~\cite{bar-coe:j:non-controversial-k-names} 
have shown that weak Gehrlein stability is incompatible with \emph{enlargement
  consistency}: For every resolute multiwinner rule $\calR$ that
elects a weakly Gehrlein-stable committee whenever such a committee exists, 
there exists an election $E$ and
committee size $k$ such that the only committee in $\calR(E,k)$ is not
a subset of the only committee in $\calR(E,k+1)$.\footnote{Enlargement
  consistency is defined for resolute rules only. An analogue of this notion for
  non-resolute rules was introduced by Elkind et
  al.~\cite{elk-fal-sko-sli:c:multiwinner-rules} under the name of
  \emph{committee monotonicity}.} 
While this result means that such rules are not well-suited  
for shortlisting tasks~\cite{bar-coe:j:non-controversial-k-names,elk-fal-sko-sli:c:multiwinner-rules},
it only holds for weak Gehrlein stability and not for strong Gehrlein stability.

Indeed, let us consider the following multiwinner variant of the Copeland
  rule (it is very similar to the NED rule of
  Coelho~\cite{coelho2004understanding} and we will call it
  \emph{strong-NED}). Given an election $E$, the score of a candidate
  is its outdegree in $M(E)$. Strong-NED chooses the committee of $k$
  candidates with the highest scores (to match the framework of
  Barber\'a and Coelho~\cite{bar-coe:j:non-controversial-k-names}, the
  rule should be resolute and so we break ties lexicographically). By
  its very definition, strong-NED satisfies enlargement
  consistency. Further, if there is a committee $W$ that is strongly
  Gehrlein-stable, then strong-NED chooses this committee (if there
  are $m$ candidates in total, then the outdegree of each candidate
  from $W$ is at least $m-|W|$, whereas the outdegree of each
  candidate outside of $W$ is at most $m-|W|-1$; Barber\'a and
  Coelho~\cite{bar-coe:j:non-controversial-k-names} also gave this
  argument, but assuming an odd number of voters).

\subsection{Local Stability}
\noindent
An important feature of Gehrlein stability is that it is strongly
driven by the majority opinions.  Suppose, for instance, that a group
of 1000 voters is to elect 10 representatives from the set $\{c_1,
\dots, c_{20}\}$, and the society is strongly polarized: 501 voters
rank the candidates as $c_1\succ\dots\succ c_{20}$, whereas the
remaining 499 voters rank the candidates as $c_{20}\succ\dots\succ
c_1$. Then the unique Gehrlein-stable committee of size $10$ consists
of candidates $c_1, \dots, c_{10}$, and the preferences of 499 voters
are effectively ignored.  While this is appropriate in some settings,
in other cases we may want to ensure that candidates who are
well-liked by significant minorities of voters are also elected.

Aziz et al.~\cite{aziz:j:jr} formalize this idea in the context of
approval voting, where each voter submits a set of candidates that she
approves of (rather than a ranked ballot). Specifically, they say that
committee $S$, $|S|=k$, provides {\em justified representation} in an
election $(C, V)$ with $|V|=n$, where each voter $i$ is associated
with an approval ballot $A_i\subseteq C$, if there is no group of
voters $V'\subseteq V$ with $|V'|\ge \lceil\frac{n}{k}\rceil$ such
that $A_i\cap S=\emptyset$ for each $i\in V'$, yet there exists a
candidate $c\in C\setminus S$ approved by all voters in
$V'$. Informally speaking, this definition requires that each
`cohesive' group of voters of size at least $q=
\lceil\frac{n}{k}\rceil$ is represented in the committee.  The choice
of threshold $q=\lceil\frac{n}{k}\rceil$ (known as the {\em Hare
  quota}) is natural in the context of approval voting: it ensures
that, when the electorate is composed of $k$ equal-sized groups of
voters, with sets of candidates approved by each group being pairwise
disjoint, each group is allocated a representative.

Extending this idea to ordinal ballots and to an arbitrary threshold
$q$, we obtain the following definition.

\begin{definition}\label{def:ls}
  Consider an election $E = (C,V)$ with $|V|=n$ and a positive value $q\in{\mathbb Q}$. 
  A committee $S$ \emph{violates local stability for quota $q$} 
  if there exists a group $V^*\subseteq V$ with $|V^*|\ge q$ 
  and a candidate $c \in C \setminus S$ such that each voter from
  $V^*$ prefers $c$ to each member of $S$; otherwise, $S$ {\em 
  provides local stability for quota $q$}.
\end{definition}

Note that, while in the context of approval voting the notion of group cohesiveness
can be defined in absolute terms (a group is considered cohesive if there is a candidate 
approved by all group members), for ranked ballots a cohesive group is defined relative
to a given committee (a group is cohesive with respect to $S$ if all its members prefer
some candidate to $S$). Another important difference between the two settings is that, 
while a committee that provides justified representation is guaranteed to exist
and can be found in polynomial time~\cite{aziz:j:jr}, a committee that provides local stability
may fail to exist, even if we use the same value of the quota, i.e., $q=\lceil\frac{n}{k}\rceil$.

\begin{example}\label{ex:nols} {\em Fix an integer $d\ge 2$; let
    $X=\{x_1, \dots, x_d\}$, $Y=\{y_1, \dots, y_d\}$, $Z=\{z_1, \dots,
    z_d\}$, and set $C=\{a, b\}\cup X\cup Y\cup Z$.  There are $4$
    voters with preferences $a\succ b\succ\cdots$ and $4$ voters with
    preferences $b\succ a\succ\cdots$.  Also, for each $i\in[d]$,
    there are two voters with preferences $x_i\succ y_i\succ
    z_i\succ\cdots$, two voters with preferences $y_i\succ z_i\succ
    x_i\succ\cdots$, and two voters with preferences $z_i\succ
    x_i\succ y_i\succ\cdots$. Altogether, we have $n=6d+8$ voters.
    Set $k=2d+1$; then for $d\ge 2$ we obtain $\lceil\frac{n}{k}\rceil
    = 4$.  We will now argue that this election admits no locally
    stable committee of size $k$ for quota
    $q=\lceil\frac{n}{k}\rceil$. Suppose for the sake of contradiction
    that $S$ is a locally stable committee of size $k$ for this value
    of the quota.  Note first that for each $i\in[d]$ we have $|\{x_i,
    y_i, z_i\}\cap S|\ge 2$.  Indeed, suppose that this is not the
    case for some $i\in[d]$.  By symmetry, we can assume without loss
    of generality that $y_i, z_i\not\in S$. However, then there are
    $4$ voters who prefer $z_i$ to every member of the committee, a
    contradiction with local stability. Thus, $S$ contains at least
    $2d$ candidates in $X\cup Y \cup Z$ and hence $|S\cap\{a, b\}|\le
    1$.  Thus at least one of $a$ or $b$ does not belong to the
    committee and either the four $a \succ b \succ \cdots$ voters or
    the four $b \succ a \succ \cdots$ voter witness that $S$ is not
    locally stable.  or the second four voters
}
\end{example}

In Section~\ref{sec:complexity}, we will use the idea from
Example~\ref{ex:nols} to argue that it is NP-hard to decide whether a
given election admits a locally stable committee.

Definition~\ref{def:ls} does not specify a value of the quota
$q$. Intuitively, the considerations that should determine the choice
of quota are the same as for Single Transferable Vote (STV), and one
can choose any of the quotas that are used for STV (see, e.g., the
survey by Tideman~\cite{tideman95}).  In particular, for $k=1$ and the
Hare quota $q=\lceil\frac{n}{k}\rceil$ we obtain Pareto optimality: a
committee $\{a\}$ of size $k=1$ is locally stable for quota
$\lceil\frac{n}{k}\rceil$ if there is no other candidate $c$ such that
all voters prefer $c$ to $a$. For $k=1$ and
$q=\lceil\frac{n}{k+1}\rceil$ (the {\em Hagenbach-Bischoff
  quota}), %
locally stable committees for quota $q$ are those whose unique element
is a weak Condorcet winner; for $k=1$ and
$q=\lfloor\frac{n}{k+1}\rfloor+1$ (the {\em Droop quota}), a locally
stable committee for quota $q$ has the Condorcet winner as its only
member.

\begin{remark}
{\em
For $k=2$ and $q=\left\lceil\frac{n}{k}\right\rceil$, locally stable committees are closely
related to Condorcet winning sets, as defined by Elkind et
al.~\cite{journals/scw/ElkindLS15}, and, more generally, locally
stable committees are related to $\theta$-winning
sets~\cite{journals/scw/ElkindLS15}.  Elkind et
al.~\cite{journals/scw/ElkindLS15} say that a set of candidates $S$ is
a {\em $\theta$-winning set} in an election $(C, V)$ with $|V|=n$ if
for each candidate $c\in C\setminus S$ there are more than $\theta n$
voters who prefer some member of $S$ to $c$; a $\frac{1}{2}$-winning
set is called a {\em Condorcet winning set}.  Importantly, unlike
locally stable committees, $\theta$-winning sets are defined in terms
of strict inequalities. If we replace `more than $\theta n$' with `at
least $\theta n$' in the definition of Elkind et
al.~\cite{journals/scw/ElkindLS15}, we obtain the definition of local
stability for quota $q=(1-\theta)n$.  
Elkind et al.~\cite{journals/scw/ElkindLS15} define a voting rule that
for a given election $E$ and committee size $k$ outputs a size-$k$
$\theta$-winning set for the smallest possible $\theta$. This rule, by
definition, outputs locally stable committees whenever they exist.
We remark that the $15$-voter, $15$-candidate election
described by Elkind et al.~\cite{journals/scw/ElkindLS15} is an
example of an election with no locally stable committee for
$q=\left\lceil\frac{n}{k}\right\rceil$ and $k=2$, thus complementing Example~\ref{ex:nols}
(which works for odd $k\ge 5$).
}
\end{remark}

For concreteness, from now on we fix the quota to be
$q=\lfloor\frac{n}{k+1}\rfloor+1$ (the Droop quota), and use the
expression `locally stable committee' to refer to locally stable
committees for this value of the quota. However, some of our results
extend to other values of $q$ as well.

\smallskip

\noindent{\bf Full Local Stability\ }
Aziz et al.~\cite{aziz:j:jr} also proposed the notion of
\emph{extended justified representation}, which deals with larger
groups of voters that, intuitively, are entitled to more than a single
representative. To apply their idea to ranked ballots, we need to
explain how voters evaluate possible deviations.  We require a new
committee to be a Pareto improvement over the old one: given a committee $S$
and a size-$\ell$ set of candidates $T$, we say that a voter $v$
prefers $T$ to $S$ if there is a bijection $\mu: T \to \best_{\ell}(S)$ such that for each $c\in T$ voter $v$ weakly prefers
$c$ to $\mu(c)$ and for some $c\in T$ voter $v$ strictly prefers
$c$ to $\mu(c)$.  We now present our analogue of extended justified
representation for ranked ballots, which we call full local stability.

\begin{definition}\label{def:fls}
  Consider an election $(C,V)$ with $|V|=n$.  We say that a committee
  $S$, $|S|=k$, \emph{violates $\ell$-local stability} for
  $\ell\in[k]$ if there exists a group of voters $V^*\subseteq V$ with
  $|V^*|\ge \left\lfloor\frac{\ell\cdot n}{k+1}\right\rfloor+1$ and a
  set of $\ell$ candidates $T$, such that 
  each voter $v\in V^*$ prefers $T$ to $S$; otherwise, $S$ provides
  $\ell$-local stability.  A committee $S$ with $|S|=k$ provides
  \emph{full local stability} if it provides $\ell$-local stability
  for all $\ell\in[k]$.
\end{definition}

By construction, 1-local stability is simply local stability, and hence 
every committee that provides full local stability also provides local stability.
Local stability and full local stability 
extend from committees to voting rules in a natural way.

\begin{definition}
  A multiwinner voting rule $\calR$ {\em satisfies (full) local
    stability} if for every election $E=(C, V)$ and every target
  committee size $k$ such that in $E$ the set of size-$k$ committees
  that provide (full) local stability is not empty, it holds that
  every committee in $\calR(E)$ provides (full) local stability.
\end{definition}

Weakly/strongly Gehrlein-stable rules can be defined in a similar manner.

\smallskip

\noindent{\bf Solid Coalitions and Dummett's Proportionality\ }
Let us examine the relation between (full) local stability, and the
solid coalitions property and Dummett's proportionality. Both these
notions were used by Elkind et
al.~\cite{elk-fal-sko-sli:c:multiwinner-rules} as indicators of voting
rules' ability to find committees that represent voters proportionally
(however, we give a slightly different definition than they give; see
explanation below).

\begin{definition}\label{def:solidCoalitions}
  Consider an election $(C,V)$ with $|V|=n$.  We say that a committee
  $S$, $|S|=k$, {\em violates the solid coalitions property} if there
  exists a candidate $c \notin S$ who is ranked first by some $\lceil
  \frac{n}{k} \rceil$ voters.  We say that a committee $S$, $|S|=k$,
  {\em violates Dummett's proportionality} if there exists a set of
  $\ell$ candidates $Q$ with  $Q \setminus S \neq \emptyset$ and a set of $\lceil
  \frac{\ell n}{k} \rceil$ voters $V$ such that for each voter $v\in V$
  it holds that $\best_{\ell}(v)=Q$.
\end{definition}

\begin{proposition}\label{prop:relationToSolidCoalitions}
A locally stable committee satisfies the solid coalitions property;
a fully locally stable committee satisfies Dummett's proportionality.
\end{proposition}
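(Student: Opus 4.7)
The plan is to prove both halves of the proposition by contrapositive. Because local stability and $\ell$-local stability are defined with respect to the Droop quota $q = \lfloor n/(k+1)\rfloor + 1$, the workhorse of both arguments is the elementary inequality
\[
\lceil \ell n/k \rceil \;\ge\; \lfloor \ell n/(k+1)\rfloor + 1 \qquad (\ell \ge 1),
\]
which holds because $\ell n/k > \ell n/(k+1) \ge \lfloor \ell n/(k+1)\rfloor$ and $\lceil \ell n/k\rceil$ is an integer. This single inequality converts the thresholds appearing in the solid coalitions and Dummett definitions into the Droop thresholds demanded by (full) local stability.

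For the first claim, I would start from a committee $S$ that violates the solid coalitions property: some $c \in C \setminus S$ is top-ranked by a set $V^*$ of at least $\lceil n/k \rceil$ voters. Every voter in $V^*$ prefers $c$ to each member of $S$, and $|V^*| \ge \lfloor n/(k+1)\rfloor + 1$ by the above inequality with $\ell = 1$. Hence $(V^*, c)$ witnesses a violation of local stability, proving the contrapositive.

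The second claim is the substantive part. Suppose $S$ violates Dummett's proportionality via a set $Q$ with $|Q| = \ell$, $Q \setminus S \ne \emptyset$, and a set $V^*$ of at least $\lceil \ell n/k \rceil$ voters all having $\best_\ell(v) = Q$. Setting $T := Q$, I would exhibit, for each $v \in V^*$, a bijection $\mu \colon T \to \best_\ell(S)$ that witnesses $v$'s preference of $T$ over $S$. The key structural observation is that $\best_\ell(S) = (Q \cap S) \cup B$ for some $B \subseteq S \setminus Q$ of size $|Q \setminus S|$, because every candidate in $Q \cap S$ is ranked by $v$ above every candidate in $S \setminus Q$ (the latter lie outside $\best_\ell(v) = Q$). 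I would then let $\mu$ fix $Q \cap S$ pointwise and send $Q \setminus S$ to $B$ bijectively. For each $c \in Q \setminus S$, the fact that $c \in \best_\ell(v)$ while $\mu(c) \in B \subseteq C \setminus \best_\ell(v)$ gives the required strict preference, and $Q \setminus S \ne \emptyset$ guarantees at least one such $c$. The displayed inequality applied with the given $\ell$ then shows $|V^*| \ge \lfloor \ell n/(k+1)\rfloor + 1$, so $V^*$ and $T$ witness a violation of $\ell$-local stability, contradicting full local stability of $S$.

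I expect the main obstacle to be the second part: the definition of a voter preferring a set $T$ to a committee $S$ involves a matching into $\best_\ell(S)$ with a strict-preference clause, rather than a direct set comparison. Pinning down the right bijection $\mu$---identity on $Q \cap S$, arbitrary on $Q \setminus S$---is the one nontrivial move; once it is in place, the strict preferences are immediate consequences of $\best_\ell(v) = Q$, and the threshold comparison reduces to the single inequality stated at the outset.
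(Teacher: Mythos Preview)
Your proposal is correct and follows the same contrapositive strategy as the paper, hinging on the inequality $\lceil \ell n/k\rceil \ge \lfloor \ell n/(k+1)\rfloor + 1$. The paper's own proof is terser: it spells out only the first half and asserts that ``the same argument can be used'' for full local stability, whereas you actually carry out that argument by exhibiting the bijection $\mu$ (identity on $Q\cap S$, arbitrary from $Q\setminus S$ onto the remaining elements of $\best_\ell(S)$) and verifying the strict-preference clause---a detail the paper leaves implicit but which is genuinely needed given the Pareto-improvement definition of ``$v$ prefers $T$ to $S$.''
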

\begin{proof}
  We present the proof for local stability; for full local stability
  the same argument can be used.  Consider an election $E$, a target
  committee size $k$, and a committee $S$ such that some $\lceil
  \frac{n}{k} \rceil$ voters rank a candidate $c \in C \setminus S$
  first.  Since $\frac{n}{k} > \frac{n}{k+1}$, also $\lceil \frac{n}{k} \rceil
  \geq \lfloor\frac{n}{k+1}\rfloor+1$, and so the same group of voters
  witnesses that $S$ violates local stability.
\end{proof}

The solid coalitions property and Dummett's proportionality are
usually defined as properties of multiwinner rules. In contrast,
Definition~\ref{def:solidCoalitions} treats them as properties of
coalitions, which is essential for establishing a relation such as the
one given in Proposition~\ref{prop:relationToSolidCoalitions}. Indeed,
local stability as the property of a rule puts no restrictions on the
output of the rule for profiles for which there exists no locally
stable committees and, in particular, for such profiles local
stability does not guarantee the solid coalitions property.

\section{Three Restricted Domains}
\noindent
In Section~\ref{sec:defs} we have argued that Gehrlein stability is a
majoritarian notion, whereas local stability is directed towards
proportional representation. Now we reinforce this intuition by
describing the structure of Gehrlein-stable and locally stable committees for three well-studied
restricted preference domains.  Namely, we consider
single-crossing elections, single-peaked elections, and elections
where the voters have preferences over parties (modeled as large sets
of `similar' candidates).

The following observation will be useful in our analysis.  Consider an
election $E=(C, V)$ for which $M(E)$ is a transitive tournament, i.e.,
if $(a, b)$ and $(b, c)$ are edges of $M(E)$ then $(a, c)$ is also an
edge of $M(E)$.  In such a case, the set of ordered pairs $(a, b)$
such that $(a, b)\in M(E)$ is a linear order on $C$ and we refer to it
as the \emph{majority preference order}. Given a positive integer $k$,
we let the \emph{centrist committee} $S_\mathrm{center}$ consist of
the top $k$ candidates in the majority preference order.
Theorem~\ref{thm:sg-easy} implies the following simple observation.

\begin{proposition}\label{prop:centrist}
  If $M(E)$ is transitive then for each committee size $k$,
  $S_\mathrm{center}$ is strongly Gehrlein-stable.
\end{proposition}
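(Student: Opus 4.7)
The plan is to reduce the claim to the definition of strong Gehrlein-stability using only the transitivity of $M(E)$. Since $M(E)$ is a tournament (no ties), for each pair of distinct candidates exactly one of $(a,b)$ or $(b,a)$ is an edge of $M(E)$; transitivity then makes the relation ``$a$ beats $b$ pairwise'' a strict linear order on $C$, which is precisely the majority preference order. I would use this structural observation as my starting point.

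Next, I would unfold the definition of $S_\mathrm{center}$: it is exactly the set of top $k$ candidates in the majority preference order. Thus for any $c\in S_\mathrm{center}$ and any $d\in C\setminus S_\mathrm{center}$, $c$ precedes $d$ in this order, which by construction means $(c,d)\in M(E)$, i.e., $c$ wins the pairwise election between $c$ and $d$. This is exactly the condition in the definition of strong Gehrlein-stability, so $S_\mathrm{center}$ qualifies.

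As an alternative route (also indicated by the hint that the proposition follows from Theorem~\ref{thm:sg-easy}), one could note that in a transitive tournament $W(E)=M(E)$ has no directed cycles, so every strongly connected component is a singleton. Ordering these singletons as in the proof of Theorem~\ref{thm:sg-easy} reproduces the majority preference order, and choosing $s=k$ in the characterization $\bigcup_{i\le s}C_i$ yields $S_\mathrm{center}$, which is therefore the (unique) strongly Gehrlein-stable committee of size $k$.

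There is no real obstacle here: the only thing to be careful about is making sure that transitivity is used in the right place, namely to rule out ties (so that $M(E)$ really is a tournament, not just a transitive relation with possible ties) and to conclude that the pairwise-majority relation is a linear order rather than merely acyclic. Once that is pinned down, the proof is a direct verification, and I would keep it to a couple of lines.
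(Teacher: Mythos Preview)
Your proposal is correct. The paper does not spell out a proof but simply records that the proposition follows from Theorem~\ref{thm:sg-easy}; your second route reproduces exactly that derivation (transitive tournament $\Rightarrow$ singleton strongly connected components in $W(E)=M(E)$ $\Rightarrow$ the top-$k$ prefix is the unique strongly Gehrlein-stable committee of size~$k$). Your first route, the direct verification from the definition, is an equally valid and slightly more self-contained alternative.

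One small wording quibble: in your final paragraph you write that transitivity is used ``to rule out ties,'' but it is the \emph{tournament} part of the hypothesis (implicit from the paragraph preceding the proposition) that excludes ties; transitivity alone is compatible with tied pairs. Your actual argument uses the two ingredients in the right places, so this is only a matter of phrasing, not a gap.
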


It is well known that if the number of voters is odd and the election
is either single-peaked or single-crossing (see definitions below),
then $M(E)$ is a transitive tournament. Thus
Proposition~\ref{prop:centrist} is very useful in such settings.

\subsection{Single-Crossing Preferences}\label{sec:sc}
\noindent
The notion of single-crossing preferences was proposed by
Mirrlees~\cite{mir:j:single-crossing} and Roberts~\cite{rob:j:tax}.
Informally speaking, an election is single-crossing if (the voters can
be ordered in such a way that) as we move from the first voter to the
last one, the relative order within each pair of candidates changes at
most once.  For a review of examples where single-crossing preferences
can arise, we refer the reader to the work of Saporiti and
Tohm\'e~\cite{sap-toh:j:single-crossing-strategy-proofness}.

\begin{definition}\label{def:sc}
  An election $(C,V)$ with $V=(v_1, \dots, v_n)$ is
  {\em single-crossing}%
  \footnote{Our definition of single-crossing elections
  assumes that the order of voters is fixed. More commonly, an election is defined to be single-crossing if voters
  can be permuted so that the condition formulated in Definition~\ref{def:sc} holds. For our purposes, this distinction
  is not important, and the approach we chose makes the presentation more compact.} 
  if for each pair of candidates $a, b \in C$ such that
  $v_1$ prefers $a$ over $b$ we have  $\{i \mid a \succ_{v_i} b\} = [t]$
  for some $t\in[n]$.
\end{definition}

Single-crossing elections have many desirable properties. In the
context of our work, the most important one is that if $E=(C, V)$ is a
single-crossing election with an odd number of voters, then $M(E)$ is
a transitive tournament. Moreover if $|V|=2n'+1$, the majority
preference order coincides with the preferences of the $(n'+1)$-st
voter~\cite{rothstein:j:representative-voters}.  By
Proposition~\ref{prop:centrist}, this means that for single-crossing
elections with an odd number of voters the centrist committee exists,
is strongly Gehrlein-stable, and consists of the top $k$ candidates in
the preference ranking of the median voter, which justifies the term
\emph{centrist committee}.

\begin{proposition}\label{prop:single_crossing}
  For a single-crossing election with an odd number of voters, the
  centrist committee is strongly Gehrlein-stable.
\end{proposition}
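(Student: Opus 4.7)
The plan is to observe that this proposition is an almost immediate corollary of two facts already stated in the surrounding text, so the proof reduces to verifying their hypotheses match.

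First, I would invoke the classical structural result on single-crossing elections attributed to Rothstein that is recalled right before the proposition: if $E=(C,V)$ is single-crossing with $|V|=2n'+1$ an odd number of voters, then the majority preference relation agrees with the preference order $\succ_{v_{n'+1}}$ of the median voter. Indeed, by Definition~\ref{def:sc}, for any ordered pair of candidates $(a,b)$ the set $\{i : a \succ_{v_i} b\}$ is either $[t]$ or $\{t+1,\dots,n\}$ for some $t$; hence $a$ beats $b$ in their pairwise election iff the median voter $v_{n'+1}$ ranks $a$ above $b$. In particular $M(E)$ is a transitive tournament, and the majority preference order is $\succ_{v_{n'+1}}$.

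Second, with transitivity of $M(E)$ established, I would apply Proposition~\ref{prop:centrist}: this directly yields that the centrist committee $S_\mathrm{center}$, consisting of the top $k$ candidates in the majority preference order (equivalently, the $k$ candidates ranked highest by the median voter), is strongly Gehrlein-stable. This completes the argument.

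There is essentially no obstacle here; the only thing to be careful about is to make explicit the reduction to Proposition~\ref{prop:centrist}, noting that oddness of $|V|$ is used to rule out ties in pairwise comparisons so that $M(E)$ is genuinely a tournament (otherwise $W(E)$ could have additional edges from tied pairs and the centrist committee would not be uniquely determined by a linear majority order). The transitivity-from-single-crossing step is standard and could also be argued directly via induction on the voter index, but citing Rothstein keeps the proof compact.
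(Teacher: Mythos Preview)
Your proposal is correct and follows precisely the paper's own approach: the paper establishes this proposition in the text immediately preceding it by invoking Rothstein's result (that for single-crossing elections with an odd number of voters $M(E)$ is a transitive tournament whose majority order coincides with the median voter's preferences) and then applying Proposition~\ref{prop:centrist}. Your additional remark making explicit how oddness of $|V|$ rules out ties is a welcome clarification but not a departure from the paper's argument.
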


Locally stable committees turn out to be very different.  Let $E = (C,
V)$ be a single-crossing election with $|V|=n$, and let $k$ be the
target committee size; then the Droop quota for $E$ is $q =
\left\lfloor\frac{n}{k+1}\right\rfloor+1$.  We say that a size-$k$
committee $S$ is \emph{single-crossing uniform} for $E$ if for each
$\ell \in [k]$ it contains the candidate ranked first by voter
$v_{\ell\cdot q}$.  Note that a single-crossing uniform committee need
not be unique: e.g., if all the voters rank the same candidate first,
then every committee containing this candidate is single-crossing
uniform.

\begin{example}\label{ex:sc}
{\em
  Figure~\ref{fig:proportional_and_centrist_sc}
  shows a single-crossing election with $15$ voters
  over the candidate
  set $C = \{a,b,c,d,e,f,g,h,i,j,k\}$.
  The first voter ranks the candidates in the alphabetic order, 
  and the last voter ranks them in the reverse alphabetic order.  
  For readability, we list the top four-ranked candidates only. 
  For the target committee size $4$, the centrist
  committee (marked with a rectangle) is $\{c,d,e,f\}$,
  and the unique single-crossing uniform committee is
  $\{b,d,g,i\}$ (marked with dashed ellipses).
  If we reorder the voters from $v_{15}$
  to $v_1$, the unique single-crossing uniform committee is $\{c,d,h,j\}$.
}
\end{example}

We will now argue that 
single-crossing uniform committees are locally stable.

\newcommand{\vote}[5]{
  \node at ( 0.5+#1*0.5 ,1 ) {$v_{#1}$};
  \node at ( 0.5+#1*0.5 ,0.5 ) {$\strut #2$};
  \node at ( 0.5+#1*0.5 ,0.1 ) {$\strut #3$};
  \node at ( 0.5+#1*0.5 ,-0.3 ) {$\strut #4$};
  \node at ( 0.5+#1*0.5 ,-0.7 ) {$\strut #5$};
  \node at ( 0.5+#1*0.5 ,-1.1 ) {$\strut \vdots$};
}
\newcommand{\mvote}[7]{}
\begin{figure}[t]
\centering
\begin{tikzpicture}
  \draw (0.75,0.75) -- (8.25,0.75);

  \draw[very thick] (4.3,0.73) rectangle (4.7,-0.9);
  \draw[ultra thick, dashed] (2,0.75) ellipse (2.5mm and 5mm);
  \draw[ultra thick, dashed] (3.5,0.75) ellipse (2.5mm and 5mm);
  \draw[ultra thick, dashed] (5,0.75) ellipse (2.5mm and 5mm);
  \draw[ultra thick, dashed] (6.5,0.75) ellipse (2.5mm and 5mm);

  \vote{1}  a b c d \mvote e f g h i j k
  \vote{2}  a b c d \mvote e f g h i j k
  \vote{3}  b c a d \mvote e f g h i j k
  \vote{4}  c b a d \mvote e f g h i j k
  \vote{5}  c b d a \mvote e f g h i j k
  \vote{6}  d c b a \mvote e f g h i j k
  \vote{7}  d c e b \mvote a f g h i j k
  \vote{8}  f d e c \mvote b a g h i j k
  \vote{9}  g h f d \mvote e c b a i j k
  \vote{10} h g f d \mvote e c b a i j k
  \vote{11} h g f i \mvote e d c b a j k
  \vote{12} i h g f \mvote e d c b a j k
  \vote{13} j i h g \mvote f e d c b a k
  \vote{14} j i k h \mvote g f e d c b a 
  \vote{15} k j i h \mvote g f e d c b a 
\end{tikzpicture}
\caption{A single-crossing election (Example~\ref{ex:sc}).}

\label{fig:proportional_and_centrist_sc}
\end{figure}

\begin{proposition}\label{prop:single-crossing-local-stability}
  For every single-crossing election $E=(C, V)$ 
  and for every $k\in[|C|]$ it holds that every size-$k$ single-crossing
  uniform committee for $E$ is locally stable.
\end{proposition}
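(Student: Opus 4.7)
The plan is to proceed by contradiction. Suppose a single-crossing uniform committee $S$ of size $k$ violates local stability, witnessed by some candidate $c \notin S$ and a group $V^*\subseteq V$ with $|V^*|\ge q:=\lfloor n/(k+1)\rfloor+1$ such that every voter in $V^*$ prefers $c$ to every member of $S$. For each $\ell\in[k]$, set $s_\ell:=\best(v_{\ell q})$; by definition of a single-crossing uniform committee, $s_\ell\in S$.

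The easy case is when $V^*$ contains some ``key voter'' $v_{\ell q}$: then $v_{\ell q}$ prefers $c$ to $s_\ell$, contradicting the fact that $s_\ell$ is the top-ranked candidate of $v_{\ell q}$. The real work is to rule out the case where $V^*$ is disjoint from $\{v_q,v_{2q},\ldots,v_{kq}\}$, in which case $V^*$ is contained in the union of the $k+1$ gap intervals $I_1,\ldots,I_{k+1}$ between consecutive key voters (with $I_{k+1}$ being the tail after $v_{kq}$). I would attack this via a pigeonhole argument combined with the single-crossing property.

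For the pigeonhole step, each of $I_1,\ldots,I_k$ has exactly $q-1$ voters, and $|I_{k+1}|=n-kq$. The inequality $(k+1)q\ge n+1$, which follows directly from the definition of the Droop quota, gives $n-kq\le q-1$, so every gap contains strictly fewer than $q$ voters. Since $|V^*|\ge q$, the set $V^*$ cannot fit into a single gap, so there exist $v_i,v_j\in V^*$ with $i<\ell q<j$ for some $\ell\in[k]$. I would then invoke single-crossingness: by Definition~\ref{def:sc}, the set of voters preferring $c$ over $s_\ell$ is either a prefix or a suffix of $(v_1,\ldots,v_n)$; since it contains both $v_i$ and $v_j$, it must also contain $v_{\ell q}$, again forcing $v_{\ell q}$ to prefer $c$ to its own top choice $s_\ell$, a contradiction.

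The main delicacy is the pigeonhole step: one has to check carefully that even the trailing gap $I_{k+1}$ holds fewer than $q$ voters, and this is precisely where the exact form of the Droop quota is used. Once that bound is secured, the single-crossing property closes the argument almost immediately, so I expect no further obstacles beyond bookkeeping.
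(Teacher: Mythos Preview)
Your proof is correct and follows essentially the same approach as the paper's: both arguments use the single-crossing property to confine the voters who prefer $c$ to every member of $S$ to a single gap between consecutive key voters $v_{\ell q}$ and $v_{(\ell+1)q}$, each such gap having fewer than $q$ voters (with the same Droop-quota calculation for the tail gap). The paper organizes this as a direct case analysis on which gap contains one such voter, whereas you wrap it in pigeonhole plus contradiction, but the underlying idea is identical.
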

\begin{proof}
  Fix a single-crossing election $E=(C, V)$ with $|V|=n$
  and a target committee size $k$; set 
  $q=\left\lfloor\frac{n}{k+1}\right\rfloor+1$.
  Consider a committee $S$, $|S|=k$, that is single-crossing uniform 
  with respect to $E$. We will show that $S$ is locally stable.

  Consider an arbitrary candidate $c\not\in S$. Suppose first that some voter $v_i$
  with $i<q$ ranks $c$ above all candidates in $S$. Let $a=\best(v_q)$.
  As $a\in S$  and $E$ is single-crossing, each voter $v_j$ with $j\ge q$
  prefers $a$ to $c$. Thus, there are at most $q-1$ voters who prefer $c$ 
  to each member of $S$. 
   
  Now, suppose that some voter $v_i$ with $\ell q < i < (\ell+1)q$ for some $\ell\in[k-1]$
  ranks $c$ above all candidates in $S$; let 
  $a=\best(v_{\ell\cdot q})$, $b=\best(v_{(\ell+1)\cdot q})$. 
  By construction we have $a, b\in S$ and by the single-crossing property $a\neq b$
  (if $a=b$, then $a$ and $c$ would cross more than once).
  Also, by the single-crossing property all voters $v_j$ with $j\le \ell q$
  rank $a$ above $c$ and all voters $v_{j'}$ with $j'\ge (\ell+1)q$ rank $b$ above $c$.
  Thus, there are at most $q-1$ voters who prefer $c$ to each member of $S$.

  Finally, suppose that some voter $v_i$ with $i>kq$ ranks $c$ above all members of $S$;
  let $a=\best(v_{k\cdot q})$. We have $a\in S$ and
  by the single-crossing property all voters $v_j$ with $j\le kq$ rank $a$ above $c$.
  Thus, there are at most $n-kq$ voters who may prefer $c$ to $a$, 
  and $q> \frac{n}{k+1}$ implies $n - qk < n - \frac{nk}{k+1} = \frac{n}{k+1} < q$.
  In each case, the number of voters who may prefer $c$ to all members of $S$
  is strictly less than~$q$.
\end{proof}

The following example shows that a single-crossing uniform committee can violate Gehrlein stability
and, similarly, that the centrist committee can violate local stability.

\begin{example}\label{ex:stability_violation}
{\em
Let $C=\{a, b, c\}$. Consider the single-crossing election
where three voters rank the candidates as $a\succ b\succ c$
and four voters rank the candidates as $c\succ b\succ a$.
Let $k = 2$. We have $\left\lfloor\frac{n}{k+1}\right\rfloor+1 = 3$.
The committee $\{a, c\}$ is single-crossing uniform for this election, 
yet four voters out of seven prefer $b$ to $a$.
The committee $\{b, c\}$ is centrist, yet it is not locally stable 
since there are $q=3$ voters who prefer $a$ to 
both $b$ and $c$.
}
\end{example}

\subsection{Single-Peaked Preferences}\label{sec:sp}
\noindent
The class of single-peaked preferences, 
first introduced by Black~\cite{bla:j:rationale-of-group-decision-making}, is perhaps the most 
extensively studied restricted preference domain.

\begin{definition}
  Let $\lhd$ be an order over $C$. We say that an election $E=(C, V)$ 
  is {\em single-peaked with respect to $\lhd$} if for each voter $v\in V$ 
  and for each pair of candidates $a, b \in C$
  such that $\best(v) \lhd a \lhd b$ or $b \lhd a \lhd \best(v)$ it holds that $a \succ_v b$. 
  We will refer to $\lhd$ as a \emph{societal axis} for $E$.
\end{definition}

Just as in single-crossing elections, in single-peaked elections with an odd number of voters
the majority preference order is transitive and hence the centrist committee is well-defined
and strongly Gehrlein-stable.

\begin{proposition}\label{prop:single_peaked}
  For a single-peaked election with an odd number of voters, the
  centrist committee is strongly Gehrlein-stable.
\end{proposition}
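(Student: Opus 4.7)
The plan is to reduce directly to Proposition~\ref{prop:centrist}: it suffices to show that whenever $E=(C,V)$ is single-peaked with $|V|$ odd, the majority graph $M(E)$ is a transitive tournament.

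First, because $|V|$ is odd, for every pair of distinct candidates $c,d$ the pairwise election between them cannot be tied, so $M(E)$ contains exactly one of the edges $(c,d)$ or $(d,c)$; hence $M(E)$ is a tournament. So all that remains is transitivity.

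For transitivity I would invoke Black's classical median voter theorem, which says that every single-peaked election with an odd number of voters admits a Condorcet winner --- concretely, one may order voters according to the $\lhd$-position of their peaks (breaking ties arbitrarily) and take the candidate whose peak coincides with that of the median voter; single-peakedness with respect to $\lhd$ guarantees that this candidate beats every other candidate in a pairwise contest. I would then observe that single-peakedness is preserved under restriction: if $E$ is single-peaked with respect to $\lhd$ and $C'\subseteq C$, then the election $E'=(C',V)$ obtained by projecting each voter's order onto $C'$ is single-peaked with respect to the restriction of $\lhd$ to $C'$, and $|V|$ is still odd. Applying Black's theorem iteratively --- first on $C$ to obtain a Condorcet winner $c_1$, then on $C\setminus\{c_1\}$ to obtain $c_2$, and so on --- produces an enumeration $c_1,\dots,c_m$ of $C$ such that $c_i$ beats $c_j$ in pairwise comparison whenever $i<j$. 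This enumeration is precisely a linear extension that coincides with $M(E)$, so $M(E)$ is a transitive tournament.

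Having established that $M(E)$ is a transitive tournament, Proposition~\ref{prop:centrist} immediately yields that $S_\mathrm{center}$ is strongly Gehrlein-stable, completing the argument.

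The main (minor) obstacle is really only bookkeeping: one must be explicit that restriction preserves single-peakedness so that Black's theorem can be applied repeatedly, and one must verify that the iteratively produced Condorcet winners give a single linear order compatible with $M(E)$ rather than merely a sequence of pairwise relations. Neither step requires new ideas, and both are standard consequences of single-peakedness with odd $|V|$.
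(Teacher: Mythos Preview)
Your proposal is correct and follows essentially the same route as the paper: the paper simply asserts as a known fact that single-peaked elections with an odd number of voters have a transitive majority relation, and then appeals to Proposition~\ref{prop:centrist}. You supply additional detail for the transitivity step (via iterated application of Black's median-voter theorem to restricted candidate sets), but the overall structure of the argument is the same.
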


Moreover, we can define an analogue of a single-crossing uniform committee for single-peaked elections.
To this end, given an election $E = (C, V)$ that is single-peaked with respect to the societal axis
$\lhd$, we reorder the voters so that for the new order $V'=(v_1', \dots, v'_n)$
it holds that $\best(v'_i)\lhd \best(v'_j)$ implies $i<j$; we say
that an order of voters $V'$ that has this property is {\em $\lhd$-compatible}.
We can now use the same construction as in Section~\ref{sec:sc}. 
Specifically, given an election $E=(C, V)$ with $|V|=n$ that is single-peaked with respect to $\lhd$
and a target committee size $k$, we set $q=\left\lfloor\frac{n}{k+1}\right\rfloor+1$, 
and say that a committee $S$ is {\em single-peaked uniform} for $E$ if
for some $\lhd$-compatible order of voters $V'$  
we have $\best(v'_{\ell\cdot q})\in S$ for each $\ell\in[k]$.

\begin{proposition}\label{prop:single-peaked-local-stability}
For every single-peaked election $E=(C, V)$ 
  and for every $k\in[|C|]$ it holds that every size-$k$ single-peaked
  uniform committee for $E$ is locally stable.
\end{proposition}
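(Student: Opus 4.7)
The plan is to mirror the proof of Proposition~\ref{prop:single-crossing-local-stability}, replacing the use of the single-crossing property with single-peakedness and relying on $\lhd$-compatibility of the reordered voter list to play the role that the voter ordering plays in the single-crossing case.

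Fix a single-peaked election $E=(C,V)$ with societal axis $\lhd$, a target committee size $k$, and a single-peaked uniform committee $S$ of size $k$ witnessed by a $\lhd$-compatible ordering $V'=(v'_1,\dots,v'_n)$; set $q=\lfloor n/(k+1)\rfloor+1$ and $a_\ell=\best(v'_{\ell q})$, so that $\{a_1,\dots,a_k\}\subseteq S$. The key structural consequence of $\lhd$-compatibility is that the peak sequence $a_1,\dots,a_k$ is weakly increasing along $\lhd$, and more generally $\best(v'_i)$ is weakly to the left of $\best(v'_j)$ whenever $i\le j$. Pick any $c\in C\setminus S$; the goal is to show that strictly fewer than $q$ voters prefer $c$ to every member of $S$.

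Since $c\neq a_\ell$ for every $\ell$, define $\ell^-$ to be the largest index with $a_{\ell^-}\lhd c$ (setting $\ell^-=0$ if none exists) and $\ell^+$ the smallest index with $c\lhd a_{\ell^+}$ (setting $\ell^+=k+1$ if none exists); one cannot have $\ell^-=0$ and $\ell^+=k+1$ simultaneously. A single-peakedness argument, carried out exactly as in the single-crossing proof, gives the following two claims. If $\ell^-\ge 1$, then every voter $v'_j$ with $j\le \ell^- q$ prefers $a_{\ell^-}$ to $c$, because such a voter's peak lies weakly to the left of $a_{\ell^-}$, so either the peak equals $a_{\ell^-}$ or the peak, $a_{\ell^-}$, and $c$ appear in this order on $\lhd$, and single-peakedness yields $a_{\ell^-}\succ_{v'_j} c$. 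Symmetrically, if $\ell^+\le k$, then every voter $v'_j$ with $j\ge \ell^+ q$ prefers $a_{\ell^+}$ to $c$.

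Combining these, the voters who may prefer $c$ to every member of $S$ are confined to one of three index windows: if both $\ell^-\ge 1$ and $\ell^+\le k$, then $\ell^+=\ell^-+1$ (otherwise $a_{\ell^-+1}$ would be weakly to the right of $a_{\ell^-}$ and, being distinct from $c$, would have to satisfy $c\lhd a_{\ell^-+1}$, contradicting the minimality of $\ell^+$), so the window has $q-1$ indices; if $\ell^-=0$, only the $q-1$ indices below $q$ remain; if $\ell^+=k+1$, only indices above $kq$ remain, and $n-kq<q$ since $q>n/(k+1)$. In every case, fewer than $q$ voters prefer $c$ to every member of $S$, so $S$ is locally stable. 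The main obstacle is the careful bookkeeping of these positional cases for $c$ on the axis, and in particular the fact that the peaks $a_1,\dots,a_k$ may repeat; however, once the flanking candidates from $S$ are identified by $\ell^-$ and $\ell^+$, each sub-case reduces to a one-line application of single-peakedness.
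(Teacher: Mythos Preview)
Your proof is correct and follows essentially the same approach as the paper: a positional case analysis on the societal axis, using the nearest committee peaks on either side of the outsider $c$ together with single-peakedness to confine the potentially deviating voters to a window of size at most $q-1$ (or $n-kq<q$ at the right end). Your bookkeeping via $\ell^-$ and $\ell^+$ on the peak sequence $a_1,\dots,a_k$ is in fact slightly tidier than the paper's version, which indexes the flanking elements over all of $S$ and then implicitly identifies them with peaks; working directly with the $a_\ell$'s, as you do, sidesteps any ambiguity when $S$ contains non-peak candidates.
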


\begin{proof}
  Fix an election $E=(C, V)$ with $|C|=m$, $|V|=n$ that is single-peaked with respect to $\lhd$
  and a target committee size $k$. Assume without loss of generality that 
  $\lhd$ orders the candidates as $c_1\lhd\dots\lhd c_m$ and that $V$ is $\lhd$-compatible.
  Consider a single-peaked uniform committee $S$ of size $k$.
  Recall that $q=\left\lfloor\frac{n}{k+1}\right\rfloor+1$.
  Let $c_\ell=\best(v_q)$, $c_r=\best(v_{k\cdot q})$.
  Consider a candidate $c_j\not\in S$.
  
  Suppose first that $j<\ell$. Then for each voter $v_i$ with $i\ge q$
  the candidate $\best(v_i)$ is either $c_\ell$ or some candidate to the right 
  of $c_\ell$. Thus, all such voters prefer $c_\ell$ to $c_j$, and hence
  there can be at most $q-1$ voters who prefer $c_j$ to each member of $S$.
  By a similar argument, if $j>r$, there are at most $n-kq < q$ voters
  who prefer $c_j$ to each member of $S$.

  It remains to consider the case $\ell<j<r$. Let $\ell'=\max\{t\mid t<j, c_t\in S\}$,
  $r'=\min\{t\mid t>j, c_t\in S\}$. Set $i=\max\{i: \best(v_{i\cdot q})=c_{\ell'}\}$;
  then the most preferred candidate of voter $v_{(i+1)\cdot q}$ is $c_{r'}$. 
  Since the voters' preferences are single-peaked with respect to $\lhd$,
  $v_{i\cdot q}$ and all voters that precede her in $V$ prefer $c_{\ell'}$
  to $c_j$, and $v_{(i+1)\cdot q}$ and all voters that appear after her in $V$ prefer $c_{r'}$
  to $c_j$. Thus, only the voters in the set $V' = \{v_{i\cdot q+1}, \dots, v_{i\cdot q+q-1}\}$
  may prefer $c_j$ to all voters in $S$, and $|V'|\le q-1$.

  Thus, for any choice of $c_j\not\in S$ fewer than $q$ voters prefer $c_j$ to all members
  of $S$, and hence $S$ is locally stable. 
\end{proof}

The proof of Proposition~\ref{prop:single-peaked-local-stability} is very similar 
to the proof of Proposition~\ref{prop:single-crossing-local-stability}; we omit it due to space constraints.

Observe that the election from Example~\ref{ex:stability_violation} is single-peaked
and committee $\{a, c\}$ is single-peaked uniform for that election. 
This shows that for single-peaked elections a single-peaked uniform committee 
can violate Gehrlein stability
and the centrist committee can violate local stability.

\subsection{Party-List Elections}\label{sec:pl}
\noindent
When candidates are affiliated with political parties, it is not unusual for the voters'
preferences to be driven by party affiliations: a voter who associates herself with a political
party, ranks the candidates who belong to that party above all other candidates
(but may rank candidates that belong to other parties arbitrarily). In the presence
of a strong party discipline, we may additionally assume that all supporters of a given
party rank candidates from that party in the same way. We will call elections
with this property {\em party-list elections}.

\begin{definition}
An election $E=(C, V)$ is said to be a {\em party-list election 
for a target committee size $k$} if we can partition the set of candidates 
$C$ into pairwise disjoint sets $C_1,\dots, C_p$ and the set of voters $V$ into pairwise disjoint groups $V_1, \dots, V_p$ 
so that
\begin{inparaenum}[(i)]
  \item $|C_i| \geq k$ for each $i \in [p]$,
  \item each voter from $V_i$ prefers each candidate in $C_i$ to each candidate in $C\setminus C_i$, 
  \item for each $i \in [p]$ 
        all voters in $V_i$ order the candidates in $C_i$ in the same way.
\end{inparaenum}
\end{definition}

Party-list elections are helpful for understanding the difference
between local stability and full local stability.  Indeed, when all
voters have the same preferences over candidates, local stability only
ensures that a committee contains the unanimously most preferred
candidate. In particular, the committee that consists of the single
most preferred candidate and the $k-1$ least preferred candidates is
locally stable. On the other hand, full local stability imposes
additional constraints.  For example, when preferences are unanimous,
only the committee that consists of the $k$ most preferred candidates
satisfies full local stability.  Generalizing this observation, we
will now show that in party-list elections a fully locally stable
committee selects representatives from each set $C_i$ in proportion to
the number of voters in $V_i$.

\begin{theorem}\label{thm:partylist}
Let $E=(C, V)$ be a party-list election for a target committee size $k$,
and let $(C_1, \dots, C_p)$ and $(V_1, \dots, V_p)$ be the respective partitions of $C$ and $V$.
Then for each $i\in[p]$ 
every committee $S$ of size $k$ that provides full local stability for $E$
contains all candidates ranked in top $\left\lfloor k \cdot \frac{|V_i|}{n}\right\rfloor$ 
positions by the voters in $V_i$.
\end{theorem}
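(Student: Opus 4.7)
The plan is a proof by contradiction. Fix a party index $i$, set $k_i = \left\lfloor k\cdot \frac{|V_i|}{n}\right\rfloor$ (if $k_i = 0$ there is nothing to prove), and let $T_i \subseteq C_i$ be the top $k_i$ candidates according to the common order that $V_i$-voters place on $C_i$. Assume for contradiction that some $t \in T_i$ is missing from $S$. The goal is to show that the group $V^* = V_i$ together with the deviation $T = T_i$ violates $k_i$-local stability, contradicting full local stability of $S$.

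The quota check is immediate: from $k_i \le k|V_i|/n$ we get $|V_i| \ge k_i n/k > k_i n/(k+1)$, so $|V_i| \ge \left\lfloor \frac{k_i\cdot n}{k+1}\right\rfloor + 1$, as Definition~\ref{def:fls} requires. The substantive step is, for each $v \in V_i$, to build a bijection $\mu : T_i \to \best_{k_i}(S)$ showing that $v$ prefers $T_i$ to $S$. I would list $T_i$ as $t_1, \dots, t_{k_i}$ in party $i$'s order and the $k_i$ most preferred candidates of $v$ in $S$ as $s_1, \dots, s_{k_i}$, and set $\mu(t_j) = s_j$. Because voters in $V_i$ rank all of $C_i$ above $C \setminus C_i$ and agree with party $i$'s order inside $C_i$, a case split on whether $j \le |S \cap C_i|$ shows that $v$ weakly prefers $t_j$ to $s_j$ at every index, and the existence of a missing $t \in T_i$ forces strict preference at its own index.

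The main obstacle is the spillover case $|S \cap C_i| < k_i$, where $\best_{k_i}(S)$ extends beyond $C_i$ and its content depends on how each $v \in V_i$ ranks candidates from other parties. The party-list assumption handles this for free: every slot $j > |S \cap C_i|$ lies in $C \setminus C_i$ and is therefore strictly dominated by $t_j \in C_i$ for every voter in $V_i$, so the construction is uniform in $v$ and the bookkeeping collapses to within-party comparisons that are the same across $V_i$.
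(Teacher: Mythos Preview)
Your proposal is correct and follows the same approach as the paper: assume some top-$k_i$ candidate is missing, take $V^* = V_i$ and $T = T_i$, verify the quota, and derive a violation of $k_i$-local stability. The paper's proof is terser---it simply asserts that ``voters in $V_i$ prefer $C'_i$ to $S$'' without spelling out the bijection---so your explicit construction of $\mu$ and the case split on $j \le |S \cap C_i|$ actually fill in a detail the paper leaves to the reader.
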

\begin{proof}
Consider a committee $S$, $|S|=k$, that provides full local stability for $E$.
Fix $i\in[p]$, let $\ell=\left\lfloor k \frac{|V_i|}{n}\right\rfloor$,
and let $C'_i$ be the set of candidates ranked in top $\ell$ positions by each voter in $V_i$.
Let $c$ be some candidate in $C'_i$. If $c\not\in S$, then voters in $V_i$ prefer $C'_i$ to $S$.  
As 
\begin{align*}
|V_i| \geq \left\lfloor \frac{|V_i| \cdot k}{k+1} \right\rfloor+1 = \left\lfloor k\frac{|V_i|}{n} \cdot \frac{n}{k+1} \right\rfloor+1 \geq \left\lfloor \ell\frac{n}{k+1}\right\rfloor+1 \text{,}
\end{align*}
this would mean that $S$ violates $\ell$-local stability for $E$,
a contradiction.
\end{proof}

\begin{example}\label{example:partylist}
{\em
Consider an election $E=(C, V)$ with $C=X\cup Y\cup Z$, 
$X=\{x_1, \dots, x_4\}$, $Y=\{y_1, \dots, y_4\}$, $Z=\{z_1, \dots, z_4\}$
and $|V|=16$, where $8$ voters rank the candidates as 
$x_1\succ x_2 \succ x_3 \succ x_4\succ \dots$, 
$4$ voters rank the candidates as 
$y_1\succ y_2 \succ y_3 \succ y_4\succ \dots$ and 
$4$ voters rank the candidates as 
$z_1\succ z_2 \succ z_3 \succ z_4\succ \dots$.
Let $k=4$. Clearly, $E$ is a party-list election.
To provide full local stability, 
a committee has to contain the top two candidates from $X$, 
the top candidate from $Y$ and the top candidate from $Z$.
Thus, $\{x_1, x_2, y_1, z_1\}$ is the unique fully locally stable committee.
}
\end{example}

On the other hand, observe that in an election where two parties have
equal support, i.e., when $C$ and $V$ are partitioned into $C_1, C_2$
and $V_1, V_2$, respectively, and $|V_1|=|V_2|$, every committee $S$
that contains the top candidate in $C_1$ (according to voters in
$V_1$) and the top candidate in $C_2$ (according to voters in $V_2$),
provides local stability.  Thus, local stability can capture the idea
of diversity to some extent, but not of fully proportional
representation.

Finally, note that Gehrlein stability does not offer any guarantees in
the party-list framework: If a party is supported by more than half of
the voters, then the top $k$ candidates of this party form the unique
strongly Gehrlein-stable committee; if a party is supported by
fewer than half of the voters then it is possible that none of its
candidates is in a weakly Gehrlein-stable committee.

\section{Computational Complexity}\label{sec:complexity}
\noindent
We will now argue that finding stable committees can be computationally challenging,
both for weak Gehrlein stability and for local stability (recall that, in contrast,
for strong Gehrlein stability Theorem~\ref{thm:sg-easy} provides a polynomial-time algorithm).
Full local stability appears to be even more demanding:
we provide evidence that even checking whether a given committee 
is fully locally stable is hard as well.

\begin{theorem}\label{thm:wg-hard}
Given an election $E = (C, V)$ and a target committee size $k$ with $k\le |C|$,
it is {\em NP}-complete to decide if there exists a weakly Gehrlein-stable committee
of size $k$ for $E$.
\end{theorem}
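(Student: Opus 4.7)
The plan is to prove $\np$-completeness by establishing both membership in $\np$ and $\np$-hardness.

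Membership in $\np$ is immediate. Given a proposed committee $S$ with $|S|=k$, one verifies weak Gehrlein stability by checking, for every pair $(c,d) \in S \times (C \setminus S)$, whether at least half the voters prefer $c$ to $d$; this amounts to a polynomial number of pairwise comparisons.

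For $\np$-hardness, my strategy is to reduce from a classical $\np$-hard problem such as Exact Cover by $3$-Sets (X3C). The driving observation is a reformulation of weak Gehrlein stability: $S$ is weakly Gehrlein-stable if and only if $S$ is closed under in-neighbors in the strict majority graph $M(E)$, i.e., no candidate outside $S$ strictly beats any candidate inside $S$. By (an extension of) McGarvey's theorem, essentially any directed graph in which a prescribed set of vertex pairs is tied can be realized as the strict majority graph of some polynomially-sized election. Hence the task reduces to a purely graph-theoretic problem: finding an in-closed subset of prescribed size $k$ in a digraph of our choice.

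Given an X3C instance $(U, \mathcal{S})$ with $|U|=3m$ and $\mathcal{S}=\{S_1,\dots,S_q\}$, I would construct an election whose strict majority graph contains:
\begin{itemize}
\item one element-vertex $e_u$ for each $u \in U$;
\item a three-vertex strongly connected block $B_j$ (a directed $3$-cycle) for each $S_j \in \mathcal{S}$;
\item edges $e_u \to b$ for each $u \in S_j$ and each $b \in B_j$, so that including any vertex of $B_j$ in the committee forces the inclusion of every $e_u$ with $u \in S_j$;
\item ties between all other pairs, so that distinct blocks and distinct element-vertices are mutually independent in $M(E)$.
\end{itemize}
The committee size $k$ would then be chosen so that the in-closed sets of size exactly $k$ correspond to unions of $m$ complete set-blocks together with the $3m$ forced element-vertices, i.e., precisely to exact covers.

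The main obstacle is the \emph{padding} problem. Under the basic construction above, an in-closed set can arise from selecting any $m$ set-blocks and then freely adding element-vertices up to the target size, regardless of whether the chosen sets form an exact cover. To rule out such spurious solutions, I would augment the construction with auxiliary \emph{control} candidates that either forbid the ``free'' inclusion of any non-forced element-vertex or inflate the committee size whenever one is included. Calibrating such a gadget so that the target size $k$ is attainable exactly when the element-vertices in $S$ are precisely the $3m$ forced by a valid exact cover---and simultaneously verifying that the resulting digraph is still realizable as $M(E)$ of a polynomial-size election---is the principal technical challenge. Once this is done, both directions of the reduction follow from the in-closed-set characterization.
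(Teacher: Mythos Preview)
Your framework is the same as the paper's: you correctly observe that weak Gehrlein stability is equivalent to being in-closed in the strict majority graph $M(E)$, and you invoke McGarvey's theorem to realize an arbitrary digraph (with prescribed ties) as $M(E)$. Membership in $\np$ is also handled identically.

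Where your proposal falls short is exactly where you say it does: the padding problem for your X3C reduction is genuine, and you have not actually supplied the gadget that solves it. In your construction, element-vertices have no in-neighbors in $M(E)$, so any element can be added to any in-closed set for free. Hence choosing any $m$ set-blocks (disjoint or not) and then padding with arbitrary element-vertices up to the target size yields a weakly Gehrlein-stable committee of size $k$, destroying the correspondence with exact covers. The vague promise of ``auxiliary control candidates'' is not a proof; without a concrete gadget and a verification that it survives the McGarvey realization and the size arithmetic, the hardness direction is not established.

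The paper sidesteps this difficulty by choosing a better source problem: the strongly $\np$-hard variant of \textsc{Partially Ordered Knapsack} with $s_i=w_i$ and $b=t$. That problem is, almost by definition, ``find a downward-closed set of items in a DAG whose total size is exactly $t$''---which is precisely the in-closed-set-of-given-size question you have reduced to. Each item $i$ becomes a strongly connected block $C_i$ of $w_i$ candidates, and each precedence edge $(i,j)$ becomes a bundle of majority edges from $C_i$ to $C_j$; there is nothing to pad, because the target size already encodes the exact-weight constraint. If you want to salvage your X3C approach, you would need a gadget that makes every element-vertex ``expensive'' unless it is forced by a chosen set-block (for instance, by attaching to each $e_u$ a large strongly connected in-component that is simultaneously dominated by all blocks $B_j$ with $u\in S_j$), and then carefully recompute $k$; but at that point you are essentially re-deriving the knapsack reduction.
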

\begin{proof}
  It is immediate that this problem is in NP: given an election $E =
  (C, V)$, a target committee size $k$, and a committee $S$ with
  $|S|=k$, we can check that $S$ has no incoming edges in $M(E)$.

  To show hardness, we provide a reduction from \textsc{Partially
    Ordered Knapsack}.  An instance of this problem is given by a list
  of $r$ ordered pairs of positive integers ${\mathcal L} =
  \left((s_1, w_1), \dots, (s_r, w_r)\right)$, a capacity bound $b$, a
  target weight $t$, and a directed acyclic graph $\Gamma=([r],
  A)$. It is a `yes'-instance if there is a subset of indices
  $I\subseteq [r]$ such that $\sum_{i\in I}s_i\le b$, $\sum_{i\in
    I}w_i\ge t$ and for each directed edge $(i, j)\in A$ it holds that
  $j\in I$ implies $i\in I$.  This problem is strongly NP-complete;
  indeed, it remains NP-hard if $s_i=w_i$ and $w_i\le r$ for all
  $i\in[r]$~\cite{joh-nie:j:poknapsack}.  Note that if $s_i=w_i$ for
  all $i\in[r]$, we can assume that $b=t$, since otherwise we
  obviously have a `no'-instance.

  Given an instance $\langle{\mathcal L}, b, t, \Gamma \rangle$ of
  {\sc Partially Ordered Knapsack} with ${\mathcal L}=\left((s_1,
    w_1), \dots, (s_r, w_r)\right)$, $s_i=w_i$, $w_i\le r$ for all
  $i\in [r]$ and $b=t$, we construct an
  election %
  as follows.  For each $i\in[r]$, let $C_i=\{c_i^1, \dots,
  c_i^{w_i}\}$ and set $C= \bigcup_{i\in[r]} C_i$.  We construct the
  set of voters $V$ and the voters' preferences so that the majority
  graph of the resulting election $(C, V)$ has the following
  structure:
  \begin{itemize}
  \item[(1)] for each $i\in[r]$ the induced subgraph on $C_i$ is a
    strongly connected tournament;
  \item[(2)] for each $(i, j)\in A$ there is an edge from each
    candidate in $C_i$ to each candidate in $C_j$;
  \item[(3)] there are no other edges.
  \end{itemize}
  Using McGarvey's theorem, we can ensure that the number of voters
  $|V|$ is polynomial in $|C|$; as we have $w_i\le r$ for all $i\in
  [r]$, it follows that both the number of voters and the number of
  candidates are polynomial in $r$. Finally, we let the target
  committee size $k$ be equal to the knapsack size~$t$.

  Let $I$ be a witness that $\langle{\mathcal L}, b, t, \Gamma
  \rangle$ is a `yes'-instance of {\sc Partially Ordered Knapsack}.
  Then the set of candidates $S = \bigcup_{i\in I}C_i$ is a weakly
  Gehrlein-stable committee of size $k$: by construction, $|S|=\sum_{i\in
    I}|C_i| = \sum_{i\in I}w_i = t$, and the partial order constraints
  ensure that $S$ has no incoming edges in the weighted majority graph
  of $(C, V)$.

  Conversely, suppose that $S$ is a weakly Gehrlein-stable committee of size
  $k$ for $(C, V)$.  Note first that for each $i\in [r]$ it holds that
  $C_i\cap S\neq\emptyset$ implies $C_i\subseteq S$. Indeed, if we
  have $c\in C_i\cap S$, $c'\in C_i\setminus S$ for some $i\in[r]$ and
  some $c, c'\in C_i$ then in the weighted majority graph of $(C, V)$
  there is a path from $c'$ to $c$. This path contains an edge that
  crosses from $C_i\setminus S$ into $C_i\cap S$, a contradiction with
  $S$ being a weakly Gehrlein-stable committee. Thus, $S=\bigcup_{i\in I}C_i$
  for some $I\subseteq [r]$, and we have $\sum_{i\in I}w_i=\sum_{i\in
    I}|C_i|=k = t$. Moreover, for each directed edge $(i, j)\in A$
  such that $C_j\subseteq S$ we have $C_i\subseteq S$: indeed, the
  weighted majority graph of $(C, V)$ contains edges from candidates
  in $C_i$ to candidates in $C_j$, so if $C_i\not\subseteq S$, at
  least one of these edges would enter $S$, a contradiction with $S$
  being a weakly Gehrlein-stable committee. It follows that $I$ is a witness
  that we have started with a `yes'-instance of {\sc Partially Ordered
    Knapsack}.
\end{proof}

We obtain a similar result for locally stable committees.

\begin{theorem}\label{thm:ls-hard}
  Given an election $E = (C, V)$ and a target committee size $k$, with
  $k\le |C|$, it is {\em NP}-complete to decide if there exists a
  locally stable committee of size $k$ for $E$.
\end{theorem}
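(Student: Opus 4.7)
The plan is as follows. Membership in NP is immediate: the committee itself serves as a polynomial certificate. To verify local stability of a given $S$ with $|S|=k$, iterate over each non-member $c\in C\setminus S$, count the voters that rank $c$ above every member of $S$, and check whether this count is at most $q-1$ for $q=\lfloor n/(k+1)\rfloor+1$. This runs in time polynomial in $|C|+|V|$.

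For hardness, the natural strategy is to turn the non-existence construction of Example~\ref{ex:nols} into a full reduction. The triple gadget $\{x_i,y_i,z_i\}$ with balanced cyclic-shift voters is the key building block: when each of the three blocks has size equal to the Droop quota, the gadget forces at least two of its three candidates into any locally stable committee, because omitting two of them lets the block topping the remaining candidate witness a violation.

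Concretely, I would reduce from a problem whose structure matches triple gadgets, such as X3C (Exact Cover by 3-Sets) with universe $U=\{u_1,\ldots,u_{3t}\}$ and family $\mathcal{F}=\{F_1,\ldots,F_m\}$. Introduce a triple gadget for each $u_j$, a ``set candidate'' $s_i$ for each $F_i\in\mathcal{F}$, and quota-sized linking voter blocks that rank each $s_i$ above the triple candidates associated with the elements of $F_i$, mirroring the role of $a,b$ in Example~\ref{ex:nols}. The target committee size $k$ and voter multiplicities are tuned so that (i) each triple gadget consumes at least two slots, (ii) each element $u_j$ must be ``covered'' either via the third candidate of its triple or via a chosen set candidate $s_i$ with $u_j\in F_i$, and (iii) the only way to fit everything inside $k$ slots is to pick exactly $t$ set candidates whose sets form an exact cover. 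Conversely, a given exact cover directly yields a locally stable committee by a one-line verification that every non-member is blocked by the relevant set candidate or by the two selected triple members.

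The hard part will be the arithmetic bookkeeping. The Droop quota $\lfloor n/(k+1)\rfloor+1$ depends on both $n$ and $k$, while every gadget's block must equal $q$ exactly so that a single included member suffices to block violations and a single missing member immediately triggers one. Any added voter or gadget shifts both $n$ and $k$, so block sizes and padding must be designed in a coupled way. A standard remedy is to introduce polynomially many inert padding voters whose top $k$ choices are always forced committee members, thereby pinning $n\bmod(k+1)$ to a convenient value without creating new violation witnesses. The second subtle point is ruling out ``cheating'' committees: one needs a case analysis showing that any committee deviating from the exact-cover pattern either exceeds size $k$, drops below $k$, or leaves some triple gadget or set candidate with a quota-sized coalition preferring it to every selected member.
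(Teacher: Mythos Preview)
Your NP-membership argument is fine and matches the paper's.

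For hardness, you and the paper both build on the cyclic triple gadget of Example~\ref{ex:nols}, but the paper reduces from \textsc{3-Regular Vertex Cover} rather than \textsc{X3C}, and this choice is exactly what eliminates the bookkeeping you flag as ``the hard part''. In the paper's construction the candidates are the vertices of a $3$-regular graph $G$ together with $1.5r$ disjoint triples $\{x_j,y_j,z_j\}$, each equipped with the six cyclic voters from Example~\ref{ex:nols}; for every edge $\{\nu,\nu'\}$ one adds two voters $\nu\succ\nu'\succ\cdots$ and $\nu'\succ\nu\succ\cdots$. With $k=t+3r$ one gets $n=12r$ and hence $q=4$ on the nose, with no padding whatsoever. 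The triples force $2\cdot 1.5r=3r$ slots, leaving at most $t$ vertices in any locally stable committee; and if an edge $\{\nu,\nu'\}$ is uncovered, the three edge voters who rank $\nu$ first (by $3$-regularity) together with the single voter $\nu'\succ\nu\succ\cdots$ give exactly $4=q$ voters preferring $\nu$ to every committee member. So the degree bound itself plays the role of your ``linking voters'' and makes the quota arithmetic trivial.

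By contrast, your \textsc{X3C} sketch has a genuine gap at precisely the linking step. You say that an element $u_j$ should be ``covered either via the third candidate of its triple or via a chosen set candidate $s_i$'', but the triple gadget as in Example~\ref{ex:nols} is already satisfied once two of its three members are in the committee; nothing in your description explains which quota-sized coalition witnesses a violation when $u_j$ is left uncovered by set candidates. If you add a block of voters ranking $s_i$ first, you force \emph{every} $s_i$ into the committee; if you add voters ranking some triple member first and the relevant $s_i$'s next, you risk breaking the delicate ``two-in suffices, one-in fails'' threshold of the gadget. A working \textsc{X3C} reduction would have to specify these blocks and recompute the quota after every addition, which is exactly the coupled-arithmetic problem you anticipate but do not solve. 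The paper's vertex-cover route sidesteps all of this: no set candidates, no padding, and the covering constraint is enforced by edge voters rather than by any interaction with the triple gadgets.
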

\begin{proof}
  It is easy to see that this problem is in NP: given an election $(C,
  V)$ together with a target committee size $k$ and a committee $S$
  with $|S|=k$, we can check for each $c\in C\setminus S$ whether
  there exist at least $\left\lfloor\frac{n}{k}\right\rfloor +1$
  voters who prefer $c$ to each member of $S$.

  To prove NP-hardness, we reduce from {\sc 3-Regular Vertex
    Cover}. Recall that an instance of {\sc 3-Regular Vertex Cover} is
  given by a $3$-regular graph $G = (V, E)$ and a positive integer
  $t$; it is a `yes'-instance if $G$ admits a vertex cover of size at
  most $t$, i.e., a subset of vertices $V'\subseteq V$ with $|V'|\le
  t$ such that $\{\nu, \nu'\}\cap V'\neq\emptyset$ for each $\{\nu,
  \nu'\}\in E$.  This problem is known to be
  NP-complete~\cite{gar-joh:b:int}.

  Consider an instance $(G, t)$ of {\sc 3-Regular Vertex Cover} with
  $G=(V, E)$, $V=\{\nu_1, \dots, \nu_r\}$. Note that we have
  $|E|=1.5r$, and we can assume that $t<r-1$, since otherwise $(G, t)$
  is trivially a `yes'-instance.
  Given $(G, t)$, we construct an
  election %
  as follows. We set $C=V\cup X \cup Y \cup Z$, where $X=\{x_1, \dots,
  x_{1.5r}\}$, $Y=\{y_1, \dots, y_{1.5r}\}$, $Z = \{z_1, \dots,
  z_{1.5r}\}$.  For each edge $\{\nu, \nu'\}\in E$ we construct one
  voter with preferences $\nu\succ \nu'\succ \cdots$ and one voter with
  preferences $\nu'\succ \nu \succ \cdots$; we refer to these voters as
  the {\em edge voters}.  Also, for each $j \in [1.5r]$ we construct
  two voters with preferences $x_j\succ y_j\succ z_j\succ \cdots$, two
  voters with preferences $y_j\succ z_j \succ x_j\succ\cdots$, and two
  voters with preferences $z_j\succ x_j \succ y_j\succ\cdots$; we refer
  to these voters as the {\em xyz-voters}.  We set $k = t+3r$. Note
  that the number of voters in our instance is $n = 2|E| + 6\cdot 1.5r
  = 12r$. Thus, using the fact that $0<t<r-1$, we can bound
  $\frac{n}{k+1}$ as follows:
  $$
  \frac{n}{k+1} > \frac{12r}{4r}= 3, \qquad\text{and}\qquad \frac{n}{k+1} <
  \frac{12r}{3r}= 4.
  $$
  Thus the Droop quota is $q = \left\lfloor\frac{n}{k+1}\right\rfloor
  + 1 = 4$.

  Now, suppose that $V'$ is a vertex cover of size at most $t$; we can
  assume that $|V'|$ is exactly $t$, as otherwise we can add arbitrary
  $t-|V'|$ vertices to $V'$, and it remains a vertex cover.  Then $S =
  V'\cup X\cup Y$ is a locally stable committee of size $|S|=t+2\cdot
  1.5r= t+3r$. Indeed, for each voter one of her top two candidates is
  in the committee (for edge voters this follows from the fact that
  $V'$ is a vertex cover and for xyz-voters this is immediate from the
  construction), so local stability can only be violated if for some
  candidate $c\not\in S$ there are at least $q = 4$ voters who rank
  $c$ first.  However, by construction each candidate is ranked first
  by at most three voters.

  Conversely, suppose that $S$ is a locally stable committee of size
  $t+3r$.  The argument in Example~\ref{ex:nols} shows that $|S\cap
  \{x_j, y_j, z_j\}|\ge 2$ for each $j=1, \dots, 1.5r$.
  Hence, $|S\cap V|\le t$.  Now, suppose that $S\cap V$ is not a
  vertex cover for $G$. Consider an edge $\{\nu, \nu'\}$ with $\nu,
  \nu'\not\in S$.  Since $G$ is 3-regular, there are three edge voters
  who rank $\nu$ first; clearly, these voters prefer $\nu$ to each
  member of $S$. Moreover, there is an edge voter whose preference
  order is $\nu' \succ \nu\succ \dots$; this voter, too, prefers $\nu$
  to each member of $S$. Thus, we have identified four voters who
  prefer $\nu$ to $S$, a contradiction with the local stability of
  $S$.  This shows that $S\cap V$ is a vertex cover for $G$, and we
  have already argued that $|S\cap V|\le t$.
\end{proof}

As we have observed in the proof of Theorem~\ref{thm:ls-hard}, it is possible to verify in
polynomial time that a given committee is locally stable. This is not
the case for full local stability, as we the following theorem shows.

\begin{theorem}
Given an election $E=(C, V)$ and a committee $S$,
it is {\em coNP}-complete to decide whether $S$ provides full local stability for $E$.
\label{thm:fls-coNP}
\end{theorem}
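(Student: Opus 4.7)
For membership in coNP, the natural short witness for $S$ failing full local stability is a triple $(\ell, V^*, T)$ with $\ell \in [k]$, $|T|=\ell$, and $|V^*| \ge \lfloor \ell n/(k+1)\rfloor + 1$. Given such a witness, I would verify in polynomial time that each $v \in V^*$ prefers $T$ to $S$ by the following standard exchange argument: sort both $T$ and $\best_\ell(S)$ (from $v$'s perspective) in decreasing order of $\succ_v$, obtaining $t_1 \succ_v \dots \succ_v t_\ell$ and $s_1 \succ_v \dots \succ_v s_\ell$; then a suitable bijection $\mu$ exists if and only if $t_i \succeq_v s_i$ for every $i$ with strict preference at some position. (If $t_i \prec_v s_i$ for some $i$, then $t_i,\dots,t_\ell$ are all beaten by $s_1,\dots,s_i$, so by a pigeonhole argument no valid bijection can exist.) Thus verification is polynomial.

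For coNP-hardness, I would reduce from \textsc{Balanced Biclique}: given a bipartite graph $G=(U\cup W, E)$ with $|U|=|W|=m$ and an integer $\ell_0$, decide whether there exist $U'\subseteq U$, $W'\subseteq W$ with $|U'|=|W'|=\ell_0$ and $U'\times W'\subseteq E$. The main idea is to build an election in which voters correspond to vertices of $U$, candidates include $W$ together with an auxiliary committee $S$, and the preferences are arranged so that voter $u$ strictly Pareto-prefers a set $T\subseteq W$ with $|T|=\ell_0$ to $\best_{\ell_0}(S)_u$ if and only if $T\subseteq N(u)$. Concretely, voter $u$'s preference list would place the neighbours $N(u)$ at the very top, followed by a ``padding'' block containing enough of $S$ to make $\best_{\ell_0}(S)_u$ well-controlled, followed by $W\setminus N(u)$ and the remaining dummies. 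The parameters $n,k$ would be chosen so that the Droop-style quota for $\ell=\ell_0$ is exactly $\ell_0$ (for example, taking $k$ roughly equal to $n$ yields $\lfloor \ell_0 n/(k+1)\rfloor+1=\ell_0$ when $\ell_0\le n$). A violation at $\ell=\ell_0$ then corresponds precisely to $\ell_0$ voters from $U$ whose common neighbourhoods contain a common $\ell_0$-subset of $W$, i.e., to a balanced biclique.

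The main obstacle is ensuring that (a) violations at $\ell=\ell_0$ exactly correspond to bicliques and (b) no \emph{spurious} $\ell$-local-stability violations occur for $\ell\neq\ell_0$; without care, for small $\ell$ (especially $\ell=1$) a single voter might already witness a violation because the top of $S$ is dominated by some candidate outside $S$. The fix is to load $S$ with ``universally popular'' candidates padded into every voter's top positions, so that for $\ell<\ell_0$ the set $\best_\ell(S)_v$ coincides with $v$'s top $\ell$ candidates overall and no Pareto-improving $T$ of that size exists, while for $\ell>\ell_0$ the required coalition size $\lfloor\ell n/(k+1)\rfloor+1$ grows beyond what any set $T$ can attract under the constructed preferences. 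Once (a) and (b) are arranged, $S$ fails full local stability iff $G$ has a balanced biclique of size $\ell_0$, completing the coNP-hardness reduction.
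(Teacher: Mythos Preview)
Your coNP-membership argument is correct and matches the paper's.

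Your hardness sketch, however, has a real gap in the interaction between your two ``obstacles'' (a) and (b), and the fix you propose does not resolve it. Suppose, as you suggest, that each voter $u$'s top $\ell_0-1$ candidates are members of $S$ (the ``universally popular'' padding), followed by $N(u)$, then the rest of $S$, then $W\setminus N(u)$. Then indeed no $\ell$-violation is possible for $\ell<\ell_0$. But now look at $\ell=\ell_0$: for voter $u$ we have $\best_{\ell_0}(S)=\{s_1,\dots,s_{\ell_0}\}$ with $s_1,\dots,s_{\ell_0-1}$ the universal top block and $s_{\ell_0}$ the first ``rest of $S$'' candidate. By your own sorted-comparison criterion, any improving $T$ of size $\ell_0$ is forced to contain $s_1,\dots,s_{\ell_0-1}$ (these are $u$'s overall top $\ell_0-1$), and the single remaining slot must be some $w\in N(u)$. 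Hence an $\ell_0$-violation with coalition $\{u_1,\dots,u_{\ell_0}\}$ exists iff there is \emph{one} vertex $w\in W$ adjacent to all of $u_1,\dots,u_{\ell_0}$---that is, a single high-degree vertex in $W$, decidable in polynomial time. The biclique structure is lost: nothing forces $T$ to pick $\ell_0$ vertices from $W$. (Making the padding voter-specific does not help: if different voters have disjoint top-$(\ell_0{-}1)$ blocks in $S$, then no size-$\ell_0$ set $T$ can simultaneously contain all of them, so \emph{no} $\ell_0$-violation exists at all.)

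Conversely, if you drop the padding so that $N(u)$ sits at the very top, then with your parameter choice $k\approx n$ the Droop-style quota at $\ell=1$ is $1$, and any single voter with a nonempty neighbourhood already witnesses a $1$-violation. The quotas scale linearly in $\ell$, so you cannot make the $\ell=1$ quota large while keeping the $\ell=\ell_0$ quota equal to $\ell_0$. This is exactly the tension your last paragraph identifies, but the proposed fix does not escape it. The paper's proof sidesteps this by reducing from \textsc{Multicolored Clique} and introducing several auxiliary voter families whose role is precisely to force any improving $T$ at the target level to contain one candidate of each of $s$ ``colors''; ruling out violations at all other $\ell$ then requires a fairly delicate counting argument rather than padding. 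You would need a mechanism of comparable strength to make a biclique-based reduction go through.
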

\begin{proof}
To see that this problem is in coNP, note that a certificate for a `no`-instance is an integer $\ell \in [k]$, a set $V^*$ of voters with $|V^*|=\left\lfloor \frac{\ell n}{|S|+1}\right\rfloor+1$ and a set $T$ of candidates with $|T|=\ell$ such that voters in $V^*$ prefer $T$ to $S$.

For hardness, we reduce from the NP-complete {\sc Multicolored Clique} problem \cite{fellows2009parameterized} to the 
complement of our problem.
An instance of {\sc Multicolored Clique} is given by an undirected graph $G=(U, {\mathcal E})$, 
a positive integer $s$, and a mapping (coloring) $g\colon U\to [s]$;
it is a `yes`-instance if there exists a set of vertices $\{u_1,\dots,u_s\}\subseteq U$ with 
$g(u_i)=i$ for every $i\in[s]$ such that $\{u_1,\dots,u_s\}$ forms a clique in $G$.
We write $U_a$ to denote the neighborhood of a vertex $a\in U$, i.e., $N_a=\{b\in U\mid \{a,b\}\in {\mathcal E}\}$ and we write $U_i$ to denote all $i$-colored vertices, i.e., $U_i=\{u\in U:g(u)=i\}$.

We have to make a few additional assumptions, all of which do not impact the hardness of {\sc Multicolored Clique}:
First, we assume that $s>2$, clearly hardness remains to hold.
Further, we assume without loss of generality that $s^2$ divides $|U|$ and that that $|U_i|=\nicefrac{|U|}{s}$; this can be achieved by adding disconnected vertices.
Finally, we assume that candidates of the same color are not connected.

We construct an election as follows:
Let $S=\{w_1,w_2,\dots,w_{s+2}\}$ and $C=U \cup S$.
We refer to candidates in $U$ as vertex candidates and we say that $u$ is an $i$-colored candidate if $g(u)=i$.
We create a voter $v_a$ for each vertex $a \in U$;
this voter's preferences are \[a\succ N_a \succ w_1 \succ \dots \succ w_{s+1} \succ U\setminus (N_a\cup \{a\}) \succ w_{s+2},\] where sets are ordered arbitrarily.
Let $V_U=\{v_a\mid a\in U\}$.
Furthermore, for every $i,j\in[s]$ we create a set of voters $V_i^j$ of size $|V_i^j|=(s+1)\cdot \frac{|U|}{s^2}$.
For an integer $z$, let $\overline{z}$ denote the number $(z \mod s)+1$.
Voters in $V_i^j$ have preferences of the form
\begin{align*}
w_{s+1} & \succ U_{\overline{i}}\succ w_{\overline{j}} \succ U_{\overline{i+1}} \succ w_{\overline{j+1}} \succ\dots\\
&\succ U_{\overline{i+s-1}}\succ w_{\overline{j+s-1}}\succ w_{s+2}.
\end{align*}
Let $\bar V=\bigcup_{i,j\in[s]} V_i^j$.
Finally, let $V'$ contain $|U|+s+1$ voters of the form $S\succ U$.
We set $V= V_U\cup \bar V\cup V'$ and have $|V|=(s+3)\cdot |U|+s+1$.
Thus, for $\ell$-local stability we have a quota of $\left\lfloor \frac{\ell\cdot |V|}{|S|+1} \right\rfloor+1=\ell\cdot |U|+\left\lfloor \frac{\ell\cdot (s+1)}{s+3} \right\rfloor+1$.
Note that we have $|V^*|\geq \ell\cdot |U|+\left\lfloor \frac{\ell\cdot (s+1)}{s+3} \right\rfloor+1$ if and only if $|V^*|> \ell\cdot |U|+ \frac{\ell\cdot (s+1)}{s+3}$; we will use this condition in the following proof.
Let us now prove that $S$ does not provide full local stability for $(C, V)$ if and only if $(U, {\mathcal E})$ has a clique of size $s$.

Let $U'$ be a multicolored clique of size $s$ in $G$, i.e., for all $i\in [s]$ it holds that $C_i\cap U'\neq\emptyset$.
We will show that $S$ violates $(s+1)$-local stability.
Let us consider $T=\{w_{s+1}\}\cup U'$; we claim that a sufficient number of voters prefers $T$ to $S$.
Note that $s$ voters corresponding to $U'$ prefer $T$ to $S$.
Furthermore, all voters in $\bar V$ prefer $T$ to $S$.
In total these are $s+s^2\cdot (s+1)\cdot \frac{|U|}{s^2} = (s+1)\cdot |U| +s$. We have to show that $(s+1)|U| +s > (s+1)|U|+\frac{(s+1)\cdot (s+1)}{s+3}$. This is equivalent to $s(s+3)>(s+1)^2$, which holds for $s \geq 2$. Hence $S$ is not $(s+1)$-locally stable and thus does not provide full local stability.

For the converse direction, let us make the following useful observation: if $T$ with $|T|=\ell$ contains an element that is ranked below the $\ell$-th representative of voter $v$, then $v$ does not prefer $T$ to $S$.
Now let us first show that $S$ provides $\ell$-local stability for all $\ell\in\{1,\dots,s,s+2\}$.
To see that $S$ provides $(s+2)$-local stability, let $T\subseteq C$ with $|T|=s+2$ and let $V^*\subseteq V$ be a set of the necessary size, i.e., $|V^*|> (s+2)\cdot |U|+ \frac{(s+2)\cdot (s+1)}{s+3}$. Hence $V^*$ has to contain voters from $V'$. But for them an improvement is not possible since for any $v\in V'$, $\best{}_{s+2}(v)=S$.

To see that $S$ is 1-locally stable, note that $|V_U|$ is lower than the quota for $\ell = 1$, $|V_U|< \left\lfloor \frac{|V|}{s+3} \right\rfloor+1$. Voters from $\bar V$ and from $V'$ have their top-ranked candidate in $S$. Hence no group of sufficient size can deviate.

For $1<\ell\leq s$, if $T$ does not contain $w_{s+1}$, then voters from $\bar V$ would not deviate. Since voter from $V'$ would also not deviate either, $V^*$ would be too small:
\begin{align*}
|V^*| \leq |V_U| = |U| < 2\cdot |U|+ \frac{2\cdot (s+1)}{s+3} \text{.}
\end{align*}
Hence we can assume that $w_{s+1}\in T$.
Then, however, voters from $V_U$ are excluded as $w_{s+1}$ is only their $(s+1)$-st representative.
Hence we only have to consider voters from $\bar V$.
Note that $T$ has to contain at least one vertex candidate, because otherwise $T\subseteq S$.
Since all $V_i^j$ are symmetric, we can assume without loss of generality that $U_1\cap T\neq \emptyset$, i.e., $T$ contains a $1$-colored vertex.
We distinguish whether $T\cap \{w_1,\dots, w_{s}\}$ is empty or not; in both cases we show that $|V^*|$ cannot have a sufficient size.

If $T\cap \{w_1,\dots, w_{s}\}\neq\emptyset$, we assume (again without loss of generality) that $w_1\in T$.
Observe that if $V_i^j$ prefers $T$ to $S$, then by construction of voters in $V_i^j$
it has to hold that
\begin{align}
T\subseteq \{w_{s+1}\}\cup \{w_{\overline{j}},\dots,w_{\overline{j+\ell-2}}\} \cup U_{\overline{i}}\cup \dots\cup U_{\overline{i+\ell-2}}.\label{eq:assumption2}
\end{align}
Since $w_1\in T$, condition~\eqref{eq:assumption2} implies that $1\in \{\overline{j}, \ldots, \overline{j+\ell-2}\}$.
Similarly, since $T$ contains a $1$-colored vertex, condition~\eqref{eq:assumption2} implies that $1\in \{\overline{i}, \ldots, \overline{i+\ell-2}\}$.
We see that for both $i$ and $j$ there are $\ell-1$ possible values. Similarly as before we infer that none of the voters from $V_U$ prefers $T$ to $S$---this is because $w_{s+1} \in T$, and $w_{s+1}$ is only the $(s+1)$-st representative of voters in $V_U$. Hence $T$ (with cardinality lower than $s+1$) cannot be desirable for them. Clearly, none of the voters from $V'$ prefers $T$ to $S$. 
Thus, it follows that
\begin{align*}
|V^*| &\leq (\ell-1)^2\cdot (s+1)\cdot \frac{|U|}{s^2} 
      \leq (\ell-1)\cdot (s-1)(s+1)\cdot \frac{|U|}{s^2} < \ell \cdot |U| \text{.}
\end{align*}
Hence the size of $V^*$ cannot be sufficiently large.

Now we consider the case that $T\cap \{w_1,\dots, w_{s}\}=\emptyset$ and hence $T\subseteq U\cup \{w_{s+1}\}$.
If $V_i^j$ prefers $T$ to $S$, then $T$ has to contain $w_{s+1}$, at least one element of $U_{\overline{i}}\cup \{w_{\overline{j}}\}$, at least two elements of $U_{\overline{i}}\cup U_{\overline{i+1}}\cup \{w_{\overline{j}},w_{\overline{j+1}}\}$, $\dots$, and at least $\ell-1$ elements of $U_{\overline{i}}\cup \dots\cup U_{\overline{i+\ell-2}}\cup \{w_{\overline{j}},\dots,w_{\overline{j+\ell-2}}\}$.
Since $T\subseteq U\cup \{w_{s+1}\}$, we can assume without loss of generality that $T\subseteq U_1\cup\dots\cup U_{\ell-1}\cup \{w_{s+1}\}$.
Also, without loss of generality, we can assume that $T$ contains a candidate from $U_1$. This implies that only voters from $V_1^j$ ($j$ arbitrary) may prefer $T$ to $S$.
Now:
\begin{align*}
|V_1^1\cup\dots\cup V_1^s|=s(s+1)\cdot \frac{|U|}{s^2}=|U|+\frac{|U|}{s^2}<2|U|\leq \ell|U|.
\end{align*}
Similarly as before, none of the voters from $V_U$ and $V'$ prefers $T$ to $S$.
Hence, also in this case, we have shown $V^*$ cannot be sufficiently large.
We conclude that $S$ satisfies $\ell$-local stability for $\ell\in\{2,\dots,s\}$.

We have established that $S$ provides $\ell$-local stability for all $\ell\in\{1,\dots,s,s+2\}$.
Hence, if $S$ fails full local stability, then it fails $(s+1)$-local stability.
Let $V^*\subseteq V$ and $T\subseteq C$ witness that $S$ is not $(s+1)$-locally stable.
First, let us show that $|V^*|\geq (s+1)\cdot |U|+ s$:
Since $V^*$ witness that $S$ is not $(s+1)$-locally stable, we know that $|V^*|> (s+1)\cdot |U|+ \frac{(s+1)\cdot (s+1)}{s+3}$. 
Note that $\frac{(s+1)^2}{s+3}=s-1+\frac{4}{s+3}$. Since $s>2$, $|V^*|\geq (s+1)\cdot |U|+s$.
Since $|V^*|\geq (s+1)\cdot |U|+ s$ and $V^*\cap V'=\emptyset$ (no improvement is possible for voters in $V'$), $V^*$ has to contain at least $s$ voters from $V_U$.
First, we show that $T\cap U$ contains a vertex of every color and $|T\cap U| = s$.
Then we are going to show that $|V^*\cap V_U|=s$.
We conclude the proof by showing that the corresponding vertices form a clique in $G$.

To show that $T\cap U$ contains a vertex of every color, let us first observe that $w_{s+1}\in T$; otherwise voters in $\bar V$ would not prefer $T$ over $S$ and so $V^*$ would not be of sufficient size.
Since $T\not\subseteq S$, there exists a $j\in[s]$ such that $w_j\notin T$.
Now assume towards a contradiction that $T\cap U$ contains no $i$-colored vertices.
Let $x$ denote the number of colors which are not used in $T$; by our assumption $x \geq 1$.
We are going to show that in this case $V^*$ is not of sufficient size:
If $T$ contains neither $i$-colored vertices nor $w_j$, then voters in $V_i^j$ do not prefer $T$ to $S$.
Thus, $|V^*|$ contains at most $(s+1)|U|-x(s+1)\cdot\frac{|U|}{s^2}$ voters from $\bar V$.
Next, for $i\in[s]$, if $T$ contains an $i$-colored vertex, say vertex $a$, then only one $i$-colored voter prefers $T$ to $S$ and that is $v_a$. This follows from the fact that $i$-colored vertices are not connected; hence $v_a$ is the only $i$-colored voter that ranks $a$ above $\{w_1,\dots,w_{s+1}\}$, which is a necessary requirement for $T$ (containing $a$) to be preferable to $S$.
If $T$ does not contain an $i$-colored vertex, then all $i$-colored voters may prefer $T$ to $S$; recall these are $\frac{|U|}{s}$ many.
We see that $|V^*|$ contains at most $x\cdot \frac{|U|}{s}+(s-x)$ voters from $V_U$. Further, $|V^*|$ contains
no voters from $V'$.
This yields an upper-bound on the total number of voters in $V^*$:
\begin{align*} 
|V^*| &\leq x\cdot \frac{|U|}{s}+(s-x) + (s+1)|U|-x(s+1)\cdot\frac{|U|}{s^2}
      <s + (s+1)|U| \text{,}
\end{align*}
which yields a contradiction.
Hence $T\cap U$ contains a vertex of every color.

Since $|V^*|\geq s + (s+1)|U|$, the set $V^*$ has to contain at least $s$ voters from $V_U$.
Observe that voter $v_a$ with $g(a)=i$ may only prefer $T$ to $S$ if $a\in T$. This follows from the already established facts that $T$ contains an $i$-colored vertex and, assuming this vertex is $a$, $v_a$ is the only $i$-colored voter ranking $a$ above $\{w_1,\dots,w_{s+1}\}$.
Furthermore, if $v_a$ prefers $T$ to $S$, it has to hold that $T\cap U\subseteq N(a)\cup \{a\}$.
Hence $T\cap U$ is a clique. As $T\cap U$ contains a vertex of every color, $T\cap U$ is a multicolored clique.
\end{proof}

\begin{corollary}
Given an election $E=(C, V)$ and a committee $S$,
it is {\em W[1]}-hard to decide whether $S$ provides full local stability for $E$ when parameterized by the committee size $k$.
\end{corollary}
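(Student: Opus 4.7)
The plan is to observe that the reduction constructed in the proof of Theorem~\ref{thm:fls-coNP} is already a parameterized reduction from \textsc{Multicolored Clique} parameterized by the number of colors $s$ (equivalently, the target clique size). This problem is well-known to be $W[1]$-hard with respect to this parameter~\cite{fellows2009parameterized}, so it suffices to check two things: (i) the parameter of the target problem is bounded by a computable function of $s$, and (ii) the reduction runs in time $f(s)\cdot\mathrm{poly}(|U|+|\mathcal{E}|)$ for some computable $f$.

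For (i), I would note that in the constructed instance the committee is $S=\{w_1,\dots,w_{s+2}\}$, so $k=|S|=s+2$. Thus $k$ is a linear function of the source parameter $s$, which is the only condition required for a parameterized reduction on the parameter side. For (ii), the reduction is clearly polynomial-time (indeed, polynomial in $|U|$ and $s$), so in particular it is an FPT-reduction.

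Combining (i) and (ii), the reduction established in the proof of Theorem~\ref{thm:fls-coNP} is in fact a parameterized reduction from \textsc{Multicolored Clique} (parameterized by $s$) to the problem of checking whether a committee $S$ fails full local stability (parameterized by $k=|S|$). Since the complement of a problem has the same parameterized complexity under such reductions, it follows that deciding whether $S$ provides full local stability is $W[1]$-hard with respect to the parameter~$k$.

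The only thing to verify with a bit of care is that no step of the argument in Theorem~\ref{thm:fls-coNP} silently inflates the parameter beyond a function of $s$; since every appearance of a $w$-candidate is indexed by $1,\dots,s+2$ and the committee itself consists exactly of these candidates, the bound $k=s+2$ holds directly by construction, so no additional work is needed.
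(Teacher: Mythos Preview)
Your proposal is correct and follows essentially the same approach as the paper: the paper's proof consists of the single observation that \textsc{Multicolored Clique} is $W[1]$-hard and that the reduction from Theorem~\ref{thm:fls-coNP} is a parameterized reduction with $k=s+2$. Your write-up simply spells out the two standard conditions (parameter bound and FPT running time) that make this observation valid, which is exactly what the paper leaves implicit.
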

\begin{proof}
The {\sc Multicolored Clique} problem is {\em W[1]}-hard \cite{fellows2009parameterized} and the reduction used in the proof of Theorem~\ref{thm:fls-coNP} is a parametrized reduction ($k=s+2$).
\end{proof}

We have not settled the complexity of finding a committee that
provides full local stability, but we expect this problem to be computationally hard as well. 
More precisely, it belongs to the second level of the
polynomial hierarchy (membership verification can be expressed as
``there exists a committee such that each possible deviation by each
group of voters is not a Pareto improvement for them,'' where both
quantifiers operate over objects of polynomial size);
we expect the problem to be complete for this complexity class.

\section{Conclusions and Research Directions}
\noindent
We have considered two generalizations of the notion of a Condorcet
winner to the case of multi-winner elections: the one proposed by
Gehrlein~\cite{Gehrlein1985199} and Ratliff~\cite{ratliff2003some} and
the one defined in this paper (but inspired by the works of Aziz et
al.~\cite{aziz:j:jr} and Elkind et
al.~\cite{elk-fal-sko-sli:c:multiwinner-rules}). We have provided
evidence that the former approach is very majoritarian in spirit and
is well-suited for shortlisting tasks (in particular, we
have shown that the objection based on weakly Gehrlein-stable rules
necessarily failing enlargement consistency does not apply to strongly
Gehrlein-stable rules). On the other hand, we have given arguments
that local stability may lead to diverse committees, whereas full
local stability may lead to committees that represent the voters
proportionally. (We use qualifications such as ``may lead'' instead of
``leads'' because, technically, (fully) local stable rules 
may behave arbitrarily on elections where (fully)
locally stable committees do not exist).

In our discussion, we have only very briefly mentioned rules
that are either Gehrlein-stable or locally stable. 
Many such rules have been defined in the literature~\cite{Kamwa2015b}, 
and these rules call for a more detailed study, 
both axiomatic and algorithmic. Our results
indicate that weakly Gehrlein-stable and locally stable rules 
are unlikely to be polynomial-time computable; it would be desirable 
to find practical heuristics or design efficient exponential algorithms.

\bibliographystyle{plain}
\bibliography{main}

\end{document}